\newif\ifdraft \draftfalse
\makeatletter \@input{tex.flags} \makeatother
\definecolor{DarkGreen}{rgb}{0.1,0.5,0.1}
\definecolor{DarkRed}{rgb}{0.5,0.1,0.1}
\definecolor{DarkBlue}{rgb}{0.1,0.1,0.5}
	\DeclareMathAlphabet{\mathsf}{OT1}{cmss}{m}{n}
	\SetMathAlphabet{\mathsf}{bold}{OT1}{cmss}{bx}{n}
	\DeclareMathAlphabet{\mathtt}{OT1}{cmtt}{m}{n}
	\SetMathAlphabet{\mathtt}{bold}{OT1}{cmtt}{bx}{n}
\newcommand{\INDSTATE}[1][1]{\STATE\hspace{#1\algorithmicindent}}
\newcommand{\rynote}[1]{\ifdraft\textcolor{Red}{[Ryan: #1]}\else\ignorespaces\fi}
\newcommand\N{\mathbb{N}}
\newcommand\R{\mathbb{R}}
\newcommand{\cA}{\mathcal{A}}
\newcommand{\cD}{\mathcal{D}}
\newcommand{\cE}{\mathcal{E}}
\newcommand{\cF}{\mathcal{F}}
\newcommand{\cG}{\mathcal{G}}
\newcommand{\cK}{\mathcal{K}}
\newcommand{\cM}{\mathcal{M}}
\newcommand{\cV}{\mathcal{V}}
\newcommand{\cX}{\mathcal{X}}
\newcommand{\cY}{\mathcal{Y}}
\newcommand{\defn}{\ensuremath{\text{def}}}
\newcommand{\bbx}{\mathbf{x}}
\newcommand{\1}{\mathbbm{1}}
\newcommand{\RR}[1]{\ensuremath{\mathtt{RR}_{#1}}}
\newcommand{\loss}{\ensuremath{\mathtt{Loss}}}
\renewcommand{\tilde}{\widetilde}
\newcommand{\compodom}{\ensuremath{\mathtt{COMP}}} 
\newcommand{\compfilt}{\ensuremath{\mathtt{COMP}}} 
\newcommand{\tcompodom}{\tilde{\compodom}}
\newcommand{\tcompfilt}{\tilde{\compfilt}}
\newcommand{\compodomez}{\compodom_{\delta_g}\left(\epsilon_1,\delta_1,\cdots, \epsilon_k,\delta_k \right)}
\newcommand{\halt}{\ensuremath{\mathtt{HALT}}}
\newcommand{\cont}{\ensuremath{\mathtt{CONT}}}
\newcommand{\oldgame}{\ensuremath{\mathtt{FixedParamComp}}}
\newcommand{\game}{\ensuremath{\mathtt{AdaptParamComp}}}
\newcommand{\filtgame}{\ensuremath{\mathtt{PrivacyFilterComp}}}
\newcommand{\filt}{\ensuremath{\mathtt{F}}}
\newcommand{\eps}{\epsilon}
\renewcommand{\epsilon}{\varepsilon}
\newcommand{\var}[1]{\mathbb{V}\left[#1\right]}
\newcommand{\ex}[1]{\mathbb{E}\left[#1\right]}
\DeclareMathOperator*{\Probability}{\mathbb{P}}
\newcommand{\prob}[1]{\mathbb{P}\left[#1\right]}
\newcommand{\Prob}[2]{\Probability_{#1}\left[#2\right]}
\newtheorem{theorem}{Theorem}[section]
\newtheorem{lemma}[theorem]{Lemma}
\theoremstyle{definition}
\newtheorem{definition}[theorem]{Definition}
\theoremstyle{remark}
\title{Privacy Odometers and Filters: Pay-as-you-Go Composition}
\author{{\bf Ryan Rogers}\thanks{Department of Applied Mathematics and Computational Science, University of Pennsylvania. \texttt{ryrogers@sas.upenn.edu}.} \and {\bf Aaron Roth}\thanks{Department of Computer and Information Sciences, University of Pennsylvania. \texttt{aaroth@cis.upenn.edu}. Supported in part by an NSF CAREER award, NSF grant CNS-1513694, and a grant from the Sloan Foundation.} \and {\bf Jonathan Ullman}\thanks{College of Computer and Information Science, Northeastern University.  \texttt{jullman@ccs.neu.edu}} \and {\bf Salil Vadhan}\thanks{Center for Research on Computation \& Society and John A.~Paulson School of Engineering \& Applied Sciences, Harvard University.  \texttt{salil@seas.harvard.edu}.  Work done
while visiting the Department of Applied Mathematics and the Shing-Tung Yau Center at National Chiao-Tung University in Taiwan.
Also supported by NSF grant CNS-1237235,  a grant from the Sloan Foundation, and a Simons Investigator Award.}}
\begin{document}

\maketitle

\begin{abstract}
In this paper we initiate the study of adaptive composition in differential privacy when the length of the composition, and the privacy parameters themselves can be chosen adaptively, as a function of the outcome of previously run analyses. This case is much more delicate than the setting covered by existing composition theorems, in which the algorithms themselves can be chosen adaptively, but the privacy parameters must be fixed up front. Indeed, it isn't even clear how to define differential privacy in the adaptive parameter setting. We proceed by defining two objects which cover the two main use cases of composition theorems. A \emph{privacy filter} is a stopping time rule that allows an analyst to halt a computation before his pre-specified privacy budget is exceeded. A \emph{privacy odometer} allows the analyst to track realized privacy loss as he goes, without needing to pre-specify a privacy budget. We show that unlike the case in which privacy parameters are fixed, in the adaptive parameter setting, these two use cases are distinct. We show that there exist privacy filters with bounds comparable (up to constants) with existing privacy composition theorems. We also give a privacy odometer that nearly matches non-adaptive private composition theorems, but is sometimes worse by a small asymptotic factor. Moreover, we show that this is inherent, and that \emph{any} valid privacy odometer in the adaptive parameter setting must lose this factor, which shows a formal separation between the filter and odometer use-cases.
\end{abstract}
\newpage
\tableofcontents
\newpage
%

\section{Introduction}
\emph{Differential privacy}~\cite{DMNS06} is a stability condition on a randomized algorithm, designed to guarantee individual-level privacy during data analysis.  Informally, an algorithm is differentially private if any pair of close inputs map to similar probability distributions over outputs, where similarity is measured by two parameters $\eps$ and $\delta$.  Informally, $\eps$ measures the \emph{amount of privacy} and $\delta$ measures the \emph{failure probability} that the privacy loss is much worse than $\eps$. A signature property of differential privacy is that it is preserved under \emph{composition}---combining many differentially private subroutines into a single algorithm preserves differential privacy and the privacy parameters degrade gracefully.  Composability is essential for both privacy and for algorithm design.  Since differential privacy is composable, we can design a sophisticated algorithm and prove it is private without having to reason directly about its output distribution.  Instead, we can rely on the differential privacy of the basic building blocks and derive a privacy bound on the whole algorithm using the composition rules.


The composition theorem for differential privacy is very strong, and holds even if the choice of which differentially private subroutine to run is adaptive---that is, the choice of the next algorithm may depend on the output of previous algorithms.  This property is essential in algorithm design, but also more generally in modeling unstructured sequences of data analyses that might be run by a human data analyst, or even by many data analysts on the same data set, while only loosely coordinating with one another. Even setting aside privacy, it can be very challenging to analyze the statistical properties of general adaptive procedures for analyzing a dataset, and the fact that adaptively chosen differentially private algorithms compose has recently been used to give strong guarantees of statistical validity for adaptive data analysis~\cite{DFHPRR15, BNSSSU16}.


However, all the known composition theorems for differential privacy~\cite{DMNS06, DKMMN06, DRV10, KOV15, MV16} have an important and generally overlooked caveat.  Although the choice of the next subroutine in the composition may be adaptive, the number of subroutines called and choice of the privacy parameters $\eps$ and $\delta$ for each subroutine must be fixed in advance.  Indeed, it is not even clear how to define differential privacy if the privacy parameters are not fixed in advance. This is generally acceptable when designing a single algorithm (that has a worst-case analysis), since worst-case eventualities need to be anticipated and budgeted for in order to prove a theorem. However, it is \emph{not} acceptable when modeling the unstructured adaptivity of a data analyst, who may not know ahead of time (before seeing the results of intermediate analyses) what he wants to do with the data. When controlling privacy loss across multiple data analysts, the problem is even worse.


As a simple stylized example, suppose that $\cA$ is some algorithm (possibly modeling a human data analyst) for selecting \emph{statistical queries}\footnote{A \emph{statistical query} is parameterized by a predicate $\phi$, and asks ``how many elements of the dataset satisfy $\phi$?''  Changing a single element of the dataset can change the answer to the statistical query by at most $1$.} as a function of the answers to previously selected queries. It is known that for any one statistical query $q$ and any data set $\bbx$, releasing the perturbed answer $\hat{a} = q(\bbx) + Z$ where $Z\sim \mathrm{Lap}(1/\epsilon)$ is a Laplace random variable, ensures $(\epsilon, 0)$-differential privacy.  Composition theorems allow us to reason about the composition of $k$ such operations, where the queries can be chosen adaptively by $\cA$, as in the following simple program.

\noindent\fbox{%
    \parbox{\textwidth}{%
\noindent\textrm{\bf Example1}$(\bbx)$:
\begin{algorithmic}
\INDSTATE[0] For $i = 1$ to $k$:
 \INDSTATE[1] Let $q_i = \cA(\hat{a}_1,\ldots,\hat{a}_{i-1})$ and let $\hat{a}_i = q_i(\bbx) + \mathrm{Lap}(1/\epsilon)$.
  \INDSTATE[0] Output $(\hat{a}_1,\dots,\hat{a}_k)$.
\end{algorithmic}
}}

The ``basic'' composition theorem~\cite{DMNS06} asserts that Example1 is $(\epsilon k, 0)$-differentially private. The ``advanced'' composition theorem~\cite{DRV10} gives a more sophisticated bound and asserts that (provided that $\eps$ is sufficiently small), the algorithm satisfies $(\epsilon\sqrt{8k\ln(1/\delta)},\delta)$-differential privacy for any $\delta > 0$. There is even an ``optimal'' composition theorem \cite{KOV15}, although its precise statement is too complicated to describe here.  These analyses crucially assume that both the number of iterations $k$ and the parameter $\eps$ are fixed up front, even though it allows for the queries $q_i$ to be adaptively chosen.\footnote{The same analysis holds for hetereogeneous parameters $(\eps_1,\dots,\eps_k)$ are used in each round as long as they are all fixed in advance.  For basic composition $\eps k$ is replaced with $\sum_{i=1}^{k} \eps_i$ and for advanced composition $\eps\sqrt{k}$ is replaced with $\sqrt{ \sum_{i=1}^{k} \eps_i^2}$.}

Now consider a similar example where the number of iterations is not fixed up front, but actually depends on the answers to previous queries.  This is a special case of a more general setting where the privacy parameter $\eps_i$ in every round may be chosen adaptively---halting in our example is equivalent to setting $\epsilon_i = 0$ in all future rounds.

\noindent\fbox{%
    \parbox{\textwidth}{%
\noindent \textrm{\bf Example2}$(\bbx, \tau)$:
\begin{algorithmic}
\INDSTATE[0] Let $i \leftarrow 1$, $\hat{a}_1 \leftarrow q_1(\bbx) + \mathrm{Lap}(1/\epsilon)$.
\INDSTATE[0] While $\hat{a}_i \leq \tau$:
\INDSTATE[1] Let $i \leftarrow i+1$, $q_i = \cA(\hat{a}_1,\ldots,\hat{a}_{i-1})$, and let $\hat{a}_i = q_i(\bbx) + \mathrm{Lap}(1/\epsilon)$.
\INDSTATE[0] Output $(\hat{a}_1,\dots,\hat{a}_i)$.
\end{algorithmic}
}}

Example2 cannot be said to be differentially private \emph{ex ante} for any non-trivial fixed values of $\epsilon$ and $\delta$, because the computation might run for an arbitrarily long time that depends on the data itself and privacy may degrade indefinitely.  What can we say about privacy after we run the algorithm?  If the algorithm/data-analyst happens to stop after $k$ rounds, can we apply the composition theorem \emph{ex post} to conclude that it is $(\epsilon k, 0)$- and $(\eps\sqrt{8k\log(1/\delta)}, \delta)$-differentially private, as we could if the algorithm were constrained to always run for at most $k$ rounds?

In this paper, we study the composition properties of differential privacy when \emph{everything}---the choice of algorithms, the number of rounds, and the privacy parameters in each round---may be adaptively chosen.  We show that this setting is much more delicate than the settings covered by  previously known composition theorems, but that these sorts of \emph{ex post} privacy bounds do hold with only a small (but in some cases unavoidable) loss over the standard setting.

\subsection{Our Results}
We give a formal framework for reasoning about the adaptive composition of differentially private algorithms when the privacy parameters themselves can be chosen adaptively. When the parameters are chosen non-adaptively, a \emph{composition theorem} gives a high probability bound on the worst case \emph{privacy loss} that results from the output of an algorithm. In the adaptive parameter setting, it no longer makes sense to have fixed bounds on the privacy loss.  Instead, we propose two kinds of primitives capturing two natural use cases for composition theorems:
\begin{enumerate}
\item A \emph{privacy odometer} takes as input a global failure parameter $\delta_g$.  After every round $i$ in the composition of differentially private algorithms, the odometer outputs a number $\tau_i$ that may depend on the \emph{realized} privacy parameters $\eps_i, \delta_i$ in the previous rounds.  The privacy odometer guarantees that with probability $1-\delta_g$, for every round $i$, $\tau_i$ is an upper bound on the privacy loss in round $i$.
\item A \emph{privacy filter} is a way to cut off access to the dataset when the privacy parameters are too large\rynote{ better? }. It takes as input a global privacy ``budget'' $(\epsilon_g,\delta_g)$.  After every round, it either outputs $\cont$ (``continue") or $\halt$ depending on the privacy parameters from the previous rounds.  The privacy filter guarantees that with probability $1-\delta_g$, it will output $\halt$ before the privacy loss exceeds $\eps_g$. When used, it guarantees that the resulting interaction is $(\epsilon_g,\delta_g$)-DP.
\end{enumerate}

A tempting heuristic is to take the \emph{realized} privacy parameters $\eps_1,\delta_1,\dots,\eps_i,\delta_i$ and apply one of the existing composition theorems to those parameters, using that value as a privacy odometer or implementing a privacy filter by halting when getting a value that exceeds the global budget.  However this heuristic \emph{does not} necessarily give valid bounds.

We first prove that the heuristic \emph{does} work for the basic composition theorem~\cite{DMNS06} in which the parameters $\eps_i$ and $\delta_i$ add up.  We prove that summing the realized privacy parameters yields both a valid privacy odometer and filter.  The idea of a privacy filter was also considered in \cite{ES15}, who show that basic composition works in the privacy filter application.

We then show that the heuristic breaks for the advanced composition theorem~\cite{DRV10}.  However, we give a valid privacy filter that gives the same asymptotic bound as the advanced composition theorem, albeit with worse constants.  On the other hand, we show that, in some parameter regimes, the asymptotic bounds given by our privacy filter \emph{cannot} be achieved by a privacy odometer.  This result gives a formal separation between the two models when the parameters may be chosen adaptively, which does not exist when the privacy parameters are fixed.  Finally, we give a valid privacy odometer with a bound that is only slightly worse asymptotically than the bound that the advanced composition theorem would give if it were used (improperly) as a heuristic.  Our bound is worse by a factor that is never larger than $\sqrt{\log\log(n)}$ (here, $n$ is the size of the dataset) and for some parameter regimes is only a constant.



\section{Privacy Preliminaries}

Differential privacy is defined based on the following notion of similarity between two distributions.
\begin{definition}[Indistinguishable]
Two random variables $X$ and $Y$ taking values from domain $\cD$ are $(\epsilon,\delta)$-indistinguishable, denoted as $X \approx_{\epsilon,\delta} Y$, if $\forall S \subseteq \cD$,  
\[
\prob{X \in S} \leq e^\epsilon \prob{Y \in S} + \delta \quad \text{ and } \quad \prob{Y \in S} \leq e^\epsilon \prob{X \in S} + \delta.
\]
\end{definition}

There is a slight variant of indistinguishability, called \emph{point-wise indistinguishability}, which is nearly equivalent, but will be the more convenient notion for the generalizations we give in this paper.
\begin{definition}[Point-wise Indistinguishable]
Two random variables $X$ and $Y$ taking values from $\cD$ are $(\epsilon,\delta)$-point-wise indistinguishable if with probability at least $1-\delta$ over either $a \sim X$ or $a \sim Y$, we have
\[
\left| \log\left(\frac{\prob{X = a}}{\prob{Y=a}} \right)\right| \leq \epsilon.
\]
\end{definition}
\begin{lemma}[\cite{KS14}]
Let $X$ and $Y$ be two random variables taking values from $\cD$.  If $X$ and $Y$ are $(\epsilon,\delta)$-point-wise indistinguishable, then $X \approx_{\epsilon,\delta} Y$.  Also, if $X \approx_{\epsilon,\delta} Y$ then $X$ and $Y$ are $\left(2\epsilon, \frac{2\delta}{e^\epsilon \epsilon}\right)$-point-wise indistinguishable.  
\label{lem:KS}
\end{lemma}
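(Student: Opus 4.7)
The forward direction (point-wise $\Rightarrow$ distributional) is the short one. Let $B \subseteq \cD$ be the ``bad'' set of outcomes $a$ for which $\left|\log(\Pr[X=a]/\Pr[Y=a])\right| > \epsilon$; by hypothesis $\Pr[X \in B],\Pr[Y \in B] \leq \delta$. For any $S \subseteq \cD$, I would split $\Pr[X \in S] = \Pr[X \in S \setminus B] + \Pr[X \in S \cap B]$, bound the second summand by $\delta$, and apply the pointwise ratio bound $\Pr[X=a] \leq e^\epsilon \Pr[Y=a]$ on $B^c$ term-by-term to obtain $\Pr[X \in S \setminus B] \leq e^\epsilon \Pr[Y \in S \setminus B] \leq e^\epsilon \Pr[Y \in S]$. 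Summing the two estimates yields $\Pr[X \in S] \leq e^\epsilon \Pr[Y \in S] + \delta$, and the reverse inequality follows by swapping the roles of $X$ and $Y$.

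The reverse direction (distributional $\Rightarrow$ point-wise with a $2\epsilon$ threshold) is the main content. Here I would define two ``bad'' sets $B_+ = \{a : \Pr[X=a] > e^{2\epsilon}\Pr[Y=a]\}$ and $B_- = \{a : \Pr[Y=a] > e^{2\epsilon}\Pr[X=a]\}$, and aim to bound $\Pr[X \in B_+\cup B_-]$ and $\Pr[Y \in B_+\cup B_-]$. The key trick is that the defining ratio bound, summed over $B_+$, gives $\Pr[Y \in B_+] \leq e^{-2\epsilon}\Pr[X \in B_+]$, which, when substituted into the indistinguishability inequality $\Pr[X \in B_+] \leq e^\epsilon \Pr[Y \in B_+] + \delta$, yields $\Pr[X \in B_+]\,(1 - e^{-\epsilon}) \leq \delta$. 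Using the elementary estimate $1 - e^{-\epsilon} \geq \epsilon e^{-\epsilon}$ (equivalently $e^\epsilon \geq 1+\epsilon$) converts this into an $O(\delta/\epsilon)$-type bound of the form asserted in the lemma; an analogous manipulation via $e^{2\epsilon} - e^\epsilon = e^\epsilon(e^\epsilon - 1) \geq e^\epsilon \epsilon$ controls $\Pr[Y \in B_+]$ (and hence, by applying the inequality in the opposite direction, $\Pr[X \in B_-]$ as well). Symmetry in $X \leftrightarrow Y$ takes care of $B_-$, and a union bound over $B_+ \cup B_-$ under each of the two distributions finishes both point-wise guarantees.

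The main (minor) obstacle is not conceptual but a bookkeeping one: chasing the correct constants through the $1/\epsilon$-type bound and verifying that one really gets the stated $\tfrac{2\delta}{e^\epsilon \epsilon}$ under both distributions. The appearance of the $1/\epsilon$ factor is intrinsic here, since one ``pays'' for the additive $\delta$ slack using the multiplicative gap between $e^{2\epsilon}$ and $e^\epsilon$, which is only $\Theta(\epsilon)$ for small $\epsilon$---and this is precisely why the threshold in the conclusion must be blown up to $2\epsilon$ rather than $\epsilon$.
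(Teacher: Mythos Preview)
The paper does not give its own proof of this lemma; it is simply cited from \cite{KS14}, so there is no ``paper's proof'' to compare against. Your proposal follows exactly the standard argument one finds in Kasiviswanathan--Smith and elsewhere: the forward implication via splitting $S$ into the bad set and its complement is correct as written, and for the converse the self-referential trick (combine the pointwise ratio defining $B_+$ with the distributional inequality applied to $B_+$, then solve for $\Pr[X\in B_+]$) is precisely the right idea.

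One comment on the constants, which you rightly flag as the only real work: carrying your computation through gives $\Pr[X\in B_+] \leq \tfrac{\delta}{1-e^{-\epsilon}}$ and $\Pr[X\in B_-]\leq \tfrac{e^{-2\epsilon}\delta}{1-e^{-\epsilon}}$, so $\Pr[X\in B_+\cup B_-]\leq \tfrac{\delta(1+e^{-2\epsilon})}{1-e^{-\epsilon}}$. Using $1-e^{-\epsilon}\geq \epsilon e^{-\epsilon}$ and $1+e^{-2\epsilon}\leq 2$ this yields $\tfrac{2\delta e^{\epsilon}}{\epsilon}$ rather than the $\tfrac{2\delta}{e^{\epsilon}\epsilon}$ printed in the lemma statement. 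The discrepancy is a factor of $e^{2\epsilon}$, and in fact the tighter printed constant does \emph{not} follow from this argument (one can check that $\epsilon(e^{2\epsilon}+1)>2(e^\epsilon-1)$ for small $\epsilon>0$). This is almost certainly a typo in the paper's transcription of the KS14 statement rather than a flaw in your reasoning; the paper only ever uses the qualitative content of the lemma, not the specific constant.
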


We say two databases $\bbx,\bbx' \in \cX^n$ are \emph{neighboring} if they differ in at most one entry, i.e. if there exists an index $i \in [n]$ such that $\bbx_{-i} = \bbx_{-i}'$.  We can now state differential privacy in terms of indistinguishability.

\begin{definition}[Differential Privacy \cite{DMNS06}]
A randomized algorithm $\cM : \cX^n \to \cY$ with arbitrary output range $\cY$ is $(\epsilon,\delta)$-differentially private (DP) if for every pair of  neighboring databases $\bbx, \bbx'$:
\[
\cM(\bbx) \approx_{\epsilon,\delta} \cM(\bbx').
\]
\label{def:privacy}
\end{definition}

We then define the privacy loss $\loss_{\cM}(a;\bbx,\bbx')$ for outcome $a \in\cY$ and neighboring datasets $\bbx, \bbx' \in \cX^n$ as
\[
\loss_{\cM}(a;\bbx,\bbx') = \log\left( \frac{\prob{\cM(\bbx) = a}}{\prob{\cM(\bbx') = a} } \right).
\]
We note that if we can bound $\loss_\cM(a;\bbx,\bbx')$ for any neighboring datasets $\bbx,\bbx'$ with high probability over $a \sim \cM(\bbx)$, then \Cref{lem:KS} tells us that $\cM$ is differentially private.  Moreover, \Cref{lem:KS} also implies that this approach is without loss of generality (up to a small difference in the parameters).
 Thus, our composition theorems will focus on bounding the privacy loss with high probability.

A useful property of differential privacy is that it is preserved under post-processing without degrading the parameters:
\begin{theorem}[Post-Processing \cite{DMNS06}]
Let $\cM: \cX^n \to \cY$ be $(\epsilon,\delta)$-DP and $f: \cY \to \cY'$ be any randomized algorithm.  Then $f \circ \cM : \cX^n \to \cY'$ is $(\epsilon,\delta)$-DP.
\label{thm:post}
\end{theorem}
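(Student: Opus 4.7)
The plan is a direct reduction to the case of deterministic post-processing. The core observation is that the indistinguishability definition of differential privacy is phrased in terms of probabilities of events on the output, and any fixed measurable map merely pulls such events back to events on the original output space, where the privacy guarantee of $\cM$ applies directly. The randomness of $f$ contributes nothing extra because it is independent of the dataset.

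First, I would reduce to the deterministic case by conditioning on the internal randomness of $f$. Let $R$ denote the coins used by $f$, drawn from some distribution $\cR$ independent of $\bbx$ and of the randomness of $\cM$, and for each fixing $r$ let $f_r : \cY \to \cY'$ be the resulting deterministic map. Then $f \circ \cM$ has the same distribution as $f_R \circ \cM$ with $R \sim \cR$ drawn independently of $\cM(\bbx)$.

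Second, I would handle a deterministic $g : \cY \to \cY'$ directly. For any measurable $S' \subseteq \cY'$, let $T = g^{-1}(S') \subseteq \cY$. The events $\{g(\cM(\bbx)) \in S'\}$ and $\{\cM(\bbx) \in T\}$ coincide, so applying the $(\epsilon,\delta)$-DP guarantee of $\cM$ to the measurable set $T$ immediately yields $\prob{g(\cM(\bbx)) \in S'} \leq e^\epsilon \prob{g(\cM(\bbx')) \in S'} + \delta$, with the symmetric direction following by swapping the roles of $\bbx$ and $\bbx'$.

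Third, I would combine these. For randomized $f$ and any $S' \subseteq \cY'$, conditioning on $R = r$ and applying the deterministic bound to $f_r$ gives a conditional inequality, which I would then integrate over $R \sim \cR$ using independence from $(\bbx, \cM)$ and linearity of expectation to obtain $\prob{f(\cM(\bbx)) \in S'} \leq e^\epsilon \prob{f(\cM(\bbx')) \in S'} + \delta$. This establishes $f(\cM(\bbx)) \approx_{\epsilon,\delta} f(\cM(\bbx'))$, i.e., $(\epsilon,\delta)$-DP of $f \circ \cM$. There is essentially no obstacle here: the argument is standard, and the only place to be careful is in setting up the reduction so that the randomness of $f$ is properly separated from that of $\cM$, allowing the deterministic DP bound to be applied conditionally before being averaged out.
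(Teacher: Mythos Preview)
Your argument is correct and is the standard proof of this fact. Note, however, that the paper does not provide its own proof of this theorem: it is stated as a cited result from \cite{DMNS06} and used as a black box. So there is no paper proof to compare against; your write-up is precisely the classical preimage-plus-averaging argument one would expect.
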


We next recall a useful characterization from \cite{KOV15}: any DP algorithm can be written as the post-processing of a simple, canonical algorithm which is a generalization of \emph{randomized response}.

\begin{definition} \label{defn:rr}
For any $\eps, \delta \geq 0$, we define the \emph{randomized response} algorithm $\RR{\epsilon,\delta}: \{0,1 \} \to \{0,\top,\bot,1 \}$ as the following (Note that if $\delta = 0$, we will simply write the algorithm $\RR{\epsilon,\delta}$ as $\RR{\epsilon}$.)
$$
\begin{array}{ll}
\prob{\RR{\epsilon,\delta}(0) = 0} = \delta& \qquad \prob{\RR{\epsilon,\delta}(1) = 0} = 0
\\
\prob{\RR{\epsilon,\delta}(0) = \top} =(1-\delta) \frac{e^\epsilon}{1+e^\epsilon} & \qquad \prob{\RR{\epsilon,\delta}(1) = \top} = (1-\delta)\frac{1}{1+ e^\epsilon}
\\
\prob{\RR{\epsilon,\delta}(0) = \bot} = (1-\delta) \frac{1}{1+e^\epsilon} & \qquad \prob{\RR{\epsilon,\delta}(1) = \bot} = (1-\delta)\frac{e^\epsilon}{1+ e^\epsilon}
\\
\prob{\RR{\epsilon,\delta}(0) = 1} = 0 & \qquad \prob{\RR{\epsilon,\delta}(1) = 1} = \delta
\end{array}
$$
\end{definition}
Kairouz, Oh, and Viswanath \cite{KOV15} show that any $(\epsilon,\delta)$--DP algorithm can be viewed as a post-processing of the output of $\RR{\epsilon,\delta}$ for an appropriately chosen input.
\begin{theorem}[\cite{KOV15}, see also \cite{MV16}]
For every $(\epsilon,\delta)$-DP algorithm $\cM$ and for all neighboring databases $\bbx^0$ and $\bbx^1$, there exists a randomized algorithm $T$ where $T(\RR{\epsilon,\delta}(b))$ is identically distributed to $\cM(\bbx^b)$ for $b \in \{ 0,1\}$.
\label{thm:kov}
\end{theorem}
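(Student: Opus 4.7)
The plan is to construct the post-processing $T$ explicitly by specifying, for each $r\in\{0,\top,\bot,1\}$, the conditional distribution $\mu_r := T(\cdot \mid r)$ on $\cY$, and then verifying that the mixture $\sum_r \Pr[\RR{\epsilon,\delta}(b)=r]\,\mu_r$ equals the law of $\cM(\bbx^b)$ for both $b\in\{0,1\}$. Write $p_b(a) = \Pr[\cM(\bbx^b)=a]$, assume (WLOG, by a standard limiting argument) that $\cY$ is finite, and set $\alpha = e^\epsilon/(1+e^\epsilon)$, so that the mixture identity to be verified is $p_0 = \delta\mu_0 + (1-\delta)\alpha\mu_\top + (1-\delta)(1-\alpha)\mu_\bot$ and the symmetric identity for $p_1$.

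The strategy is a reduction to the $\delta=0$ case. First I would establish a decomposition lemma: if $p_0 \approx_{\epsilon,\delta} p_1$, there exist distributions $\mu_0,\mu_1,\tilde p_0,\tilde p_1$ with $p_b = \delta\mu_b + (1-\delta)\tilde p_b$ and with $\tilde p_0(a)\le e^\epsilon \tilde p_1(a)$, $\tilde p_1(a)\le e^\epsilon \tilde p_0(a)$ \emph{pointwise}. The natural recipe is to take $\mu_b$ proportional to the excess mass $(p_b - e^\epsilon p_{1-b})_+$; applying the DP condition to the set $A_b = \{a : p_b(a) > e^\epsilon p_{1-b}(a)\}$ gives $\int (p_b - e^\epsilon p_{1-b})_+ \le \delta$, which is exactly what is needed for $\delta \mu_b \le p_b$ pointwise, and the residual $\tilde p_b = (p_b - \delta\mu_b)/(1-\delta)$ then has bounded ratio by construction on $A_b$ (the excess is subtracted out) and automatically elsewhere.

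For the $\delta=0$ case, solve the $2\times 2$ linear system $\tilde p_0 = \alpha\mu_\top + (1-\alpha)\mu_\bot$, $\tilde p_1 = (1-\alpha)\mu_\top + \alpha\mu_\bot$ in closed form. Inverting yields
\mathmath
\mu_\top(a) \;=\; \frac{e^\epsilon \tilde p_0(a) - \tilde p_1(a)}{e^\epsilon-1}, \qquad \mu_\bot(a) \;=\; \frac{e^\epsilon \tilde p_1(a) - \tilde p_0(a)}{e^\epsilon-1}.
\mathmath
Pointwise nonnegativity of $\mu_\top, \mu_\bot$ is exactly the pointwise $(\epsilon,0)$-ratio bound on $\tilde p_0, \tilde p_1$ supplied by Step~1, and integrating gives $\int \mu_\top = \int \mu_\bot = 1$ since $\tilde p_0, \tilde p_1$ are probability distributions. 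Assembling $T(\cdot\mid 0) = \mu_0$, $T(\cdot\mid 1) = \mu_1$, $T(\cdot\mid \top)=\mu_\top$, $T(\cdot\mid\bot)=\mu_\bot$ and expanding $\sum_r \Pr[\RR{\epsilon,\delta}(b)=r]\mu_r$ recovers $\delta\mu_b + (1-\delta)\tilde p_b = p_b$, as required.

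The main obstacle is Step~1 when the DP inequality is not tight, i.e.\ when $\int (p_0 - e^\epsilon p_1)_+ < \delta$: then $\mu_0 = (p_0 - e^\epsilon p_1)_+/\delta$ has total mass strictly less than $1$ and must be topped up by some auxiliary distribution $\nu_0$ supported on the balanced region $\cY\setminus(A_0\cup A_1)$, with a corresponding adjustment to $\tilde p_0$ (and symmetrically to $\mu_1,\tilde p_1$) that preserves the pointwise ratio bound $\tilde p_0 \le e^\epsilon \tilde p_1$. One clean way to avoid a messy case analysis here is to bypass the explicit construction entirely and appeal to LP duality: the existence of $T$ is a finite linear feasibility problem, and Farkas' lemma reduces feasibility to a family of inequalities of the form $p_0(S_0) + p_1(S_1) \le e^\epsilon(\cdots) + \delta(\cdots)$ over sets $S_0,S_1\subseteq\cY$, all of which follow directly from the $(\epsilon,\delta)$-DP guarantee for $\cM$.
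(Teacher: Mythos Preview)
The paper does not give its own proof of this theorem: it is stated as a citation to \cite{KOV15} (see also \cite{MV16}) and used as a black box throughout, so there is no in-paper argument to compare against.

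That said, your proposal is essentially the standard argument and is correct in outline. The explicit inversion in Step~2 is exactly right, and the reduction in Step~1 to the $\delta=0$ case via subtracting the excess mass $(p_b - e^\epsilon p_{1-b})_+$ is the natural move. You are also right to flag the slack case $\int (p_b - e^\epsilon p_{1-b})_+ < \delta$ as the only genuine nuisance; the ``topping up'' can be made to work (one convenient choice is to put the leftover mass on $\nu_b \propto \min(p_0,p_1)$, which lies entirely in the balanced region and keeps the ratio bound intact after renormalization), and the LP/Farkas alternative you sketch is precisely how \cite{KOV15} organize the argument via their characterization of the $(\epsilon,\delta)$ ``privacy region.'' The finiteness-of-$\cY$ reduction is also standard. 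So the approach is sound; it simply has nothing in the present paper to be compared to.

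One minor point: the paper's \Cref{defn:rr} as written has a typo (the probabilities for $\RR{\epsilon,\delta}(0)$ on $\top$ and $\bot$ are both listed as $(1-\delta)\frac{e^\epsilon}{1+e^\epsilon}$, which does not sum to $1$). Your mixture identities implicitly use the intended asymmetric version $\Pr[\RR{\epsilon,\delta}(0)=\bot] = (1-\delta)\frac{1}{1+e^\epsilon}$, which is the correct reading.
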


This theorem will be useful in our analyses, because it allows us to without loss of generality analyze compositions of these simple algorithms $\RR{\epsilon,\delta}$ with varying privacy parameters.

We now define the adaptive composition of differentially private algorithms in the setting introduced by \cite{DRV10} and then extended to \emph{heterogenous} privacy parameters in \cite{MV16}, in which all of the privacy parameters are fixed prior to the start of the computation. The following ``composition game''   is an abstract model of composition in which an adversary can adaptively select between neighboring datasets at each round, as well as a differentially private algorithm to run at each round -- both choices can be a function of the realized outcomes of all previous rounds. However, crucially, the adversary must select at each round an algorithm that satisfies the privacy parameters which have been fixed ahead of time -- the choice of parameters cannot itself be a function of the realized outcomes of previous rounds. We define this model of interaction formally in \Cref{alg:oldgame} where the output is the \emph{view} of the adversary $\cA$ which includes any random coins she uses $R_\cA$ and the outcomes $A_1,\cdots, A_k$ of every round.

\begin{algorithm}
\caption{$\oldgame(\cA,\cE = (\cE_1,\cdots, \cE_k),b)$, where $\cA$ is a randomized algorithm, $\cE_1,\cdots, \cE_k$ are classes of randomized algorithms, and $b \in \{0,1\}$.}
\label{alg:oldgame}
\begin{algorithmic}
\STATE Select coin tosses $R_\cA^b$ for $\cA$ uniformly at random.
\FOR{$i = 1,\cdots, k$}
	\STATE {$\cA = \cA(R^b_\cA,A_1^b,\cdots, A_{i-1}^b)$ gives neighboring datasets $\bbx^{i,0}, \bbx^{i,1}$, and $\cM_i \in \cE_i$}
	\STATE{$\cA$ receives $A_i^b = \cM_i(\bbx^{i,b})$ }
 \ENDFOR
 \RETURN{ view $V^b = (R^b_\cA, A_1^b, \cdots, A_k^b)$}
\end{algorithmic}
\end{algorithm}

\begin{definition}[Adaptive Composition \cite{DRV10}, \cite{MV16}]\label{defn:orig}We say that  the sequence of parameters $\epsilon_1,\cdots, \epsilon_k \geq 0$, $\delta_1,\cdots, \delta_k \in [0,1)$ satisfies $(\epsilon_g,\delta_g)$-differential privacy under adaptive composition if for every adversary $\cA$, and  $\cE = (\cE_1,\cdots, \cE_k)$ where $\cE_i$ is the class of $(\epsilon_i,\delta_i)$-DP algorithms, we have $\oldgame(\cA,\cE,\cdot)$ is $(\epsilon_g,\delta_g)$-DP in its last argument, i.e. $V^0 \approx_{\epsilon_g,\delta_g} V^1$.
\end{definition}

We first state a basic composition theorem which shows that the adaptive composition satisfies differential privacy where ``the parameters just add up.''
\begin{theorem}[Basic Composition \cite{DMNS06}, \cite{DKMMN06}]
The sequence $\epsilon_1,\cdots, \epsilon_k$ and $\delta_1,\cdots \delta_k$ satisfies $(\epsilon_g,\delta_g)$-differential privacy under adaptive composition where
\[
\epsilon_g = \sum_{i=1}^k \epsilon_i, \quad \text{ and } \quad   \delta_g = \sum_{i=1}^k\delta_i.
\]
\label{thm:basic_old}
\end{theorem}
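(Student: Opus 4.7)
The plan is to prove the basic composition theorem by induction on the number of rounds $k$, reducing the inductive step to a chain-rule calculation for indistinguishability.

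For the base case $k=1$, the view in each branch is $V^b = (R_\cA^b, A_1^b)$. The coins $R_\cA^0$ and $R_\cA^1$ are identically distributed, and conditional on $R_\cA$ the adversary's choice of neighboring datasets $\bbx^{1,0}, \bbx^{1,1}$ and mechanism $\cM_1 \in \cE_1$ is deterministic; the $(\epsilon_1,\delta_1)$-DP guarantee of $\cM_1$ then gives $A_1^0 \approx_{\epsilon_1,\delta_1} A_1^1$, and including the common coins preserves this, so $V^0 \approx_{\epsilon_1,\delta_1} V^1$.

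For the inductive step, write $V^b = (V^b_{k-1}, A_k^b)$ where $V^b_{k-1} = (R_\cA^b, A_1^b, \dots, A_{k-1}^b)$ is the partial view after $k-1$ rounds. The inductive hypothesis yields $V^0_{k-1} \approx_{E,D} V^1_{k-1}$ with $E = \sum_{i<k}\epsilon_i$ and $D = \sum_{i<k}\delta_i$. Conditioning on any fixed value $v$ of the partial view, the adversary's randomness is determined (since $R_\cA$ is a component of $v$), so her selection of neighboring datasets $\bbx^{k,0}_v, \bbx^{k,1}_v$ and algorithm $\cM_{k,v} \in \cE_k$ is a fixed function of $v$; the $(\epsilon_k,\delta_k)$-DP of $\cM_{k,v}$ then yields the conditional bound $(A_k^0 \mid V^0_{k-1}=v) \approx_{\epsilon_k,\delta_k} (A_k^1 \mid V^1_{k-1}=v)$.

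The key remaining ingredient is a chain rule for indistinguishability: if $U^0 \approx_{\alpha,\beta} U^1$ and for every $u$ we have $(W^0 \mid U^0 = u) \approx_{\alpha',\beta'} (W^1 \mid U^1 = u)$, then $(U^0, W^0) \approx_{\alpha+\alpha', \beta+\beta'} (U^1, W^1)$. Applied with $(U^b, W^b) = (V^b_{k-1}, A_k^b)$ this closes the induction with $\epsilon_g = \sum_i \epsilon_i$ and $\delta_g = \sum_i \delta_i$. I would prove the chain rule by peeling off good events of mass at least $1-\beta$ and $1-\beta'$ on which the relevant likelihood ratios are exactly $e^\alpha$- and $e^{\alpha'}$-bounded, multiplying the density bounds on the intersection of these events, and union-bounding the two failure contributions.

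The main obstacle is making the $\delta$-accounting clean enough that the failure probabilities \emph{add} rather than compound with multiplicative $e^{\alpha'}$ factors. A naive expectation-based proof of the chain rule yields only $\beta' + e^{\alpha'}\beta$, which after $k$ iterations would blow up to $\sum_i e^{\sum_{j>i}\epsilon_j}\delta_i$ and miss the clean $\sum_i \delta_i$ bound. The good-event decomposition above avoids this; alternatively, one can pass through the point-wise indistinguishability reformulation in \Cref{lem:KS} (absorbing its constants into the choice of events), bound the per-round privacy losses by $\epsilon_i$ with failure probability $\delta_i$ each, and conclude by a union bound over the $k$ rounds.
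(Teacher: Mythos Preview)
The paper does not give its own proof of this statement: \Cref{thm:basic_old} is stated in the preliminaries as a cited result from \cite{DMNS06,DKMMN06}, with no argument supplied. So there is nothing in the paper to compare your proposal against.

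For what it is worth, your induction-plus-chain-rule outline is the standard route and is essentially correct. You are right that the only nontrivial point is getting the additive $\sum_i \delta_i$ rather than $\sum_i e^{\sum_{j>i}\epsilon_j}\delta_i$, and that the naive expectation argument fails here. The ``good-event'' decomposition you allude to is the right fix, but note that it is not an immediate consequence of $(\epsilon,\delta)$-indistinguishability as defined; one needs the auxiliary lemma that whenever $P \approx_{\epsilon,\delta} Q$ there exist sub-distributions $P'\leq P$, $Q'\leq Q$ with $P(\Omega)-P'(\Omega)\leq\delta$, $Q(\Omega)-Q'(\Omega)\leq\delta$, and $e^{-\epsilon}Q' \leq P' \leq e^{\epsilon}Q'$ pointwise (take $P'(a)=\min\{P(a),e^{\epsilon}Q(a)\}$, etc.). With that lemma in hand your chain rule goes through with purely additive $\beta+\beta'$. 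Your alternative route through \Cref{lem:KS} would not recover the exact constants $(\sum_i\epsilon_i,\sum_i\delta_i)$, since that conversion loses a factor of~$2$ in $\epsilon$ and a factor depending on $\epsilon$ in $\delta$; so if you want the theorem as stated, stick with the first approach.
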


We now state the advanced composition bound from \cite{DRV10} which gives a quadratic improvement to the basic composition bound.
\begin{theorem}[Advanced Composition]
For any $\hat\delta >0$, the sequence $\epsilon_1,\cdots, \epsilon_k$ and $\delta_1,\cdots \delta_k$  where $\epsilon = \epsilon_i$ and $\delta = \delta_i$ for all $i \in [k]$ satisfies $(\epsilon_g,\delta_g)$-differential privacy under adaptive composition where
\[
\epsilon_g =  \epsilon \left(e^\epsilon - 1\right) k+ \epsilon\sqrt{2k \log(1/ \hat\delta)}, \quad \text{ and } \quad \delta_g = k \delta + \hat\delta.
\]
\label{thm:drv}
\end{theorem}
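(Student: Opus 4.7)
My plan is to follow the standard two-step martingale argument for advanced composition, using the randomized-response characterization (Theorem \ref{thm:kov}) to reduce to a simple canonical form. First, I would argue that by Theorems \ref{thm:kov} and \ref{thm:post} it suffices to analyze the game in which at each round $i$ the adversary directly receives $A_i^b = \RR{\epsilon,\delta}(b)$: any $(\epsilon,\delta)$-DP mechanism the adversary picks can be written as a post-processing of $\RR{\epsilon,\delta}$ on the chosen neighbor, and the adversary may absorb such (history-dependent) post-processing into its own computation without altering the joint distribution of the view.

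Next, I would isolate the $\delta$-contribution. Let $E^b$ be the event that some $A_i^b \in \{0,1\}$. From Definition \ref{defn:rr}, this happens with probability $\delta$ per round, so a union bound gives $\Pr[E^b] \leq k\delta$. Conditioned on $\neg E^b$ the output sequence lies in $\{\top,\bot\}^k$, and the per-round log-likelihood ratio
\[
Z_i \;=\; \log\frac{\Pr[A_i^0 = a_i \mid a_1,\dots,a_{i-1}]}{\Pr[A_i^1 = a_i \mid a_1,\dots,a_{i-1}]}
\]
takes values in $\{-\epsilon, +\epsilon\}$. A short direct computation on $\RR{\epsilon,\delta}$ shows that, uniformly over the history, $\mathbb{E}[Z_i \mid a_1,\dots,a_{i-1}] = \epsilon(e^\epsilon-1)/(e^\epsilon+1) \le \epsilon(e^\epsilon-1)$.

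I would then form the martingale $S_j = \sum_{i \le j}\bigl(Z_i - \mathbb{E}[Z_i \mid a_1,\dots,a_{i-1}]\bigr)$, whose increments lie in an interval of width $2\epsilon$, and apply Azuma--Hoeffding to obtain
\[
\Pr\!\left[\sum_{i=1}^{k} Z_i \;>\; k\epsilon(e^\epsilon-1) + \epsilon\sqrt{2k\log(1/\hat\delta)}\right] \;\leq\; \hat\delta.
\]
Combining this with $\Pr[E^b] \leq k\delta$ by a further union bound, the privacy loss of the view exceeds $\epsilon_g$ with probability at most $\delta_g = k\delta + \hat\delta$. This is exactly $(\epsilon_g,\delta_g)$-point-wise indistinguishability of $V^0$ and $V^1$, which Lemma \ref{lem:KS} converts to the claimed $(\epsilon_g,\delta_g)$-DP bound.

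The main obstacle I anticipate is reconciling the adaptive choice of mechanism and neighbors with the martingale argument: at round $i$, the distribution of $Z_i$ could in principle depend on the entire history through the adversary's choices. What rescues the argument is precisely the KOV reduction: after it, the random variable driving round $i$ is always $\RR{\epsilon,\delta}$ with the same fixed parameters, so both $|Z_i|\leq\epsilon$ and the conditional-mean bound hold uniformly over any history, which is exactly the hypothesis Azuma--Hoeffding requires.
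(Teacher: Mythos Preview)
The paper does not prove Theorem~\ref{thm:drv}; it is stated in the preliminaries as a known result from \cite{DRV10}, so there is no ``paper's own proof'' to compare against. Your proposal is correct and is essentially the standard modern proof: the KOV reduction (Theorem~\ref{thm:kov}) to randomized response, a union bound to strip off the $\delta$ part, the conditional-mean bound $\mathbb{E}[Z_i\mid\text{history}]=\epsilon\frac{e^\epsilon-1}{e^\epsilon+1}\le \epsilon(e^\epsilon-1)$, and Azuma--Hoeffding on the centered martingale with increments of width $2\epsilon$. This is exactly the machinery the paper later deploys for its own results (Theorems~\ref{thm:ABPF} and~\ref{thm:main3}), with the self-normalized bound of Theorem~\ref{cor:self_norm} replacing Azuma there because the $\epsilon_i$ are random.

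One small point worth tightening in a write-up: you bound only the upper tail $\Pr[\sum_i Z_i > \epsilon_g]$, whereas point-wise indistinguishability asks for $|\loss|\le\epsilon_g$. Either invoke the two-sided Azuma bound (the mean term is nonnegative, so the lower tail is dominated by the upper), or simply observe that the roles of $V^0$ and $V^1$ are symmetric and apply the same one-sided argument to $V^1$; both routes are routine.
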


This theorem can be easily generalized to hold for values of $\epsilon_i$ that are not all equal (as done in \cite{KOV15}). However, this is not as all-encompassing as it would appear at first blush, because this straightforward generalization would not allow for the \emph{values} of $\epsilon_i$ and $\delta_i$ to be chosen adaptively by the data analyst. Indeed,the definition of differential privacy itself (Definition \ref{def:privacy}) does not straightforwardly extend to this case. The remainder of this paper is devoted to laying out a framework for sensibly talking about the privacy parameters $\epsilon_i$ and $\delta_i$ being chosen adaptively by the data analyst, and to prove composition theorems (including an analogue of Theorem \ref{thm:drv}) in this model.  


\section{Composition with Adaptively Chosen Parameters}

We now introduce the model of composition with adaptive parameter selection, and define privacy in this setting.

\subsection{Definitions}

We want to model composition as in the previous section, but allow the adversary the ability to also choose the privacy parameters $(\epsilon_i,\delta_i)$ as a function of previous rounds of interaction.  We will define the view of the interaction, similar to the view in $\oldgame$, to be the tuple that includes $\cA$'s random coin tosses $R_\cA$ and the outcomes $A = (A_1, \cdots, A_k)$ of the algorithms she chose. Formally, we define an \emph{adaptively chosen privacy parameter composition game} in \Cref{alg:game} which takes as input an adversary $\cA$, a number of rounds of interaction $k$,\footnote{Note that in the adaptive parameter composition game, the adversary has the option of effectively stopping the composition early at some round $k' < k$ by simply setting $\epsilon_i = \delta_i = 0$ for all rounds $i > k'$. Hence, the parameter $k$ will not appear in our composition theorems the way it does when privacy parameters are fixed. This means that we can effectively take $k$ to be infinite. For technical reasons, it is simpler to have a finite parameter $k$, but the reader should imagine it as being an enormous number (say the number of atoms in the universe) so as not to put any constraint at all on the number of rounds of interaction with the adversary.} and an experiment parameter $b \in \{0,1\}$.
\begin{algorithm}
\caption{$\game(\cA,k,b)$}
\label{alg:game}
\begin{algorithmic}
\STATE Select coin tosses $R_\cA$ for $\cA$ uniformly at random.
\FOR{$i = 1,\cdots, k$}
	\STATE {$\cA = \cA(R_\cA,A_1^b,\cdots, A_{i-1}^b)$ gives neighboring $\bbx^{i,0}, \bbx^{i,1}$, parameters $(\epsilon_i, \delta_i)$,  $\cM_i$ that is $(\epsilon_i,\delta_i)$-DP}
	\STATE{$\cA$ receives $A^b_i = \cM_i(\bbx^{i,b})$ }
 \ENDFOR
 \RETURN{ view $V^b = (R_\cA, A_1^b, \cdots, A_k^b)$}
\end{algorithmic}
\end{algorithm}

We then define the privacy loss with respect to $\game(\cA,k,b)$ in the following way for a fixed view $v = (r,a)$ where $r$ represents the random coin tosses of $\cA$ and we write $v_{< i} = (v_0,\cdots, v_{i-1}) := (r, a_1, \cdots, a_{i-1})$:
\begin{equation}
\loss(v) = \log\left( \frac{\prob{V^0 = v}}{\prob{V^1 = v}} \right) = \sum_{i=1}^k \log\left( \frac{\prob{\cM_i(\bbx^{i,0}) = v_i | v_{<i}} }{\prob{\cM_i(\bbx^{i,1}) = v_i | v_{<i}} } \right) \stackrel{\defn}{=} \sum_{i=1}^k \loss_i(v_{\leq i}).
\label{eq:game_loss}
\end{equation}

Note that the privacy parameters $(\epsilon_i,\delta_i)$ depend on the previous outcomes that $\cA$ receives.  We will frequently shorten our notation $\epsilon_t = \epsilon_t(v_{<t})$ and $\delta_t = \delta_t(v_{<t})$ when the outcome is understood.

It no longer makes sense to claim that the privacy loss of the adaptive parameter composition experiment is bounded by any fixed constant, because the privacy parameters (with which we would presumably want to use to bound the privacy loss) are themselves random variables. Instead, we define two objects which can be used by a data analyst to control the privacy loss of an adaptive composition of algorithms.

The first object, which we call a \emph{privacy filter}, is a stopping-time rule. It takes two global parameters $(\epsilon_g,\delta_g)$ and will at each round either output $\cont$ or $\halt$. A privacy filter can be used to guarantee that with high probability, the stated privacy budget $\epsilon_g$ is never exceeded -- the data analyst at each round $k'$ simply queries $\compfilt_{\epsilon_g,\delta_g}(\epsilon_1,\delta_1,\ldots,\epsilon_{k'},\delta_{k'},0,0,\ldots,0,0)$ \emph{before} she runs algorithm $k'$, and runs it only if the filter returns $\cont$. Again, this is guaranteed because the continuation is a feasible choice of the adversary, and the guarantees of both a filter and an odometer are quantified over all adversaries.  
We give the formal description of this interaction where $\cA$ uses the privacy filter in \Cref{alg:filter_game}.  
\begin{algorithm}
\caption{$\filtgame(\cA,k,b; \compfilt_{\epsilon_g,\delta_g})$}
\label{alg:filter_game}
\begin{algorithmic}
\STATE Select coin tosses $R_\cA^b$ for $\cA$ uniformly at random.
\FOR{$i = 1,\cdots, k$}
	\STATE {$\cA = \cA(R^b_\cA,A_1^b,\cdots, A_{i-1}^b)$ gives neighboring $\bbx^{i,0}, \bbx^{i,1}$,  $(\epsilon_i, \delta_i)$, and $\cM_i$ that is $(\epsilon_i,\delta_i)$-DP}
	\IF{$\compfilt_{\epsilon_g,\delta_g}\left( \epsilon_1,\delta_1,\cdots, \epsilon_i,\delta_i, 0,0,\cdots, 0,0\right) = \halt$}
		\STATE $A_i, \cdots, A_k = \bot$
		\STATE BREAK
	\ELSE
		\STATE{$\cA$ receives $A^b_i = \cM_i(\bbx^{i,b})$ }
	\ENDIF
 \ENDFOR
 \RETURN{ view $V_\filt^b = (R^b_\cA, A_1^b, \cdots, A_k^b)$}
\end{algorithmic}
\end{algorithm}

\begin{definition}[Privacy Filter]\label{defn:main_filt}
A function $\compfilt_{\epsilon_g,\delta_g}:  \R^{2k}_{\geq 0} \to\{\halt,\cont \}$ is a \emph{valid privacy filter} for $\epsilon_g,\delta_g \geq 0$ if for all adversaries $\cA$ and all outcome sets $S$, we have for $b \in \{0,1\}$
\[
\filtgame(\cA,k,b; \compfilt_{\epsilon_g,\delta_g}) \text{ and } \filtgame(\cA,k,1-b; \compfilt_{\epsilon_g,\delta_g})
\]
are $(\epsilon_g,\delta_g)$-point-wise indistinguishable.
\end{definition}

The second object, which we call a \emph{privacy odometer} will be parameterized by one global parameter $\delta_g$ and will provide a running real valued output that will, with probability $1-\delta_g$, upper bound the privacy loss at each round of any adaptive composition in terms of the \emph{realized} values of $\epsilon_i$ and $\delta_i$ selected at each round.  

\begin{definition}[Privacy Odometer]\label{defn:main_odom}
A function $\compodom_{\delta_g}:\R^{2k}_{\geq 0} \to \R \cup \{\infty\}$ is a \emph{valid privacy odometer} if for all adversaries in $\game(\cA,k,b)$, with probability at most $\delta_g$ over $v \sim V^0$:
\[
|\loss(v)| > \compodomez.
\]
\end{definition}

We note that a valid privacy odometer can be used to provide a running upper bound on the privacy loss at each intermediate round: the privacy loss at round $k' < k$ must with high probability be upper bounded by $\compodom_{\delta_g}\left(\epsilon_1,\delta_1,\ldots,\epsilon_{k'},\delta_{k'},0,0,\ldots,0,0\right)$.  That is, the bound that results by setting all future privacy parameters to 0. This is because setting all future privacy parameters to zero is equivalent to stopping the computation at round $k'$, and is a feasible choice for the adaptive adversary $\cA$.

%

\subsection{Focusing on Pure DP}
Throughout our analysis we will be focusing on pure DP mechanisms, i.e. the adversary selects a $\epsilon_i$-DP mechanism at each round $i$ with $\delta_i = 0$.  We show that we can, with worse constants, convert the case where an adversary can select $\delta_i>0$ each round, while still using our formulas for $\delta_i = 0$.
Specifically, we show that we can simulate $\game(\cA,k,b)$ by defining a new adversary that chooses the differentially private algorithm $\cM_i$ of adversary $\cA$, but uses a pure DP mechanism.\footnote{We thank Valentin Hartmann for pointing out an issue with an earlier version of the following result.}

\begin{lemma}
If $\tcompodom_{\delta_g}$ is a valid privacy odometer when $\{\delta_i\} \equiv 0$, then for every $\delta_g' \geq 0$, $\compodom_{\delta_g + \delta_g'}$ is a valid privacy odometer where 
$$
\compodom_{\delta_g + \delta_g'}\left(\epsilon_1,\delta_1,\cdots, \epsilon_k,\delta_k\right) = 
										\left\{ \begin{array}{lr}
										\infty & \text{ if }  \sum_{i=1}^k\tfrac{2\delta_i}{\epsilon_i e^{\epsilon_i}} > \delta_g' \\
										\tcompodom_{\delta_g} (2\epsilon_1,0,\cdots, 2 \epsilon_k,0)& \text{ otherwise}																		\end{array}\right. .
$$
If $\tcompfilt_{\epsilon_g,\delta_g}$ is a valid privacy filter when $\{\delta_i\} \equiv 0$, then for every $\delta_g' \geq 0$, $\compfilt_{\epsilon_g,\delta_g+\delta_g'}$ is a valid privacy filter where 
$$
\compfilt_{\epsilon_g,\delta_g + \delta_g'}\left(\epsilon_1,\delta_1,\cdots, \epsilon_k,\delta_k\right)  = 
										\left\{ \begin{array}{lr}
										\halt & \text{ if }  \sum_{i=1}^k\tfrac{2\delta_i}{\epsilon_i e^{\epsilon_i}}  > \delta_g' \\
										\tcompfilt_{\epsilon_g,\delta_g}(2\epsilon_1, 0,\cdots, 2\epsilon_k,0) & \text{ otherwise}																				\end{array}\right. .
$$
\label{lem:coupling}
\end{lemma}
\begin{proof}
Let $V^{(b)} = (V^{(b)}_1, \cdots, V^{(b)}_k)$ be the view of $\cA$ in $\game(\cA,k,b)$.  Recall that we have $\loss(V^{(b)}) = \sum_{i=1}^k\loss_i(V^{(b)}_{\leq i})$.  From Lemma~\ref{lem:KS} and given any $V_0^{(b)} = v_0, V_{1}^{(b)} = v_1, \cdots, V_{i-1}^{(b)} =v_{i-1}$, we have that $\loss_i(V^{(b)}_{\leq i}) \leq 2\epsilon_i(v_{<i})$ with probability at least $1-\tfrac{2\delta_{i}(v_{<i})}{\epsilon_i(v_{<i}) e^{\epsilon_i(v_{<i})}}$ over $v_{i} \sim V_i^{(b)}$.

Let $V$ be another copy of $V^{(b)}$.  We have for any realizations of the view $V_{<k} = v_{<k}$ that the following holds, 
\begin{align*}
& \Prob{V,V^{(b)}}{|\loss_i(V_{<i},V_i^{(b)})| \leq 2\epsilon_i(V_{<i}), \ \forall i \in [k] \mid V_{<k} = v_{<k}} \geq 1 -  \sum_{i=1}^k\tfrac{2\delta_{i}(v_{<i})}{\epsilon_i(v_{<i}) e^{\epsilon_i(v_{<i})}} 
\end{align*}
Note that the event in the probability is a pure DP bound, i.e. $\delta_i = 0$, for each round $i$.  Hence, we use the pure DP privacy filter or odometer to bound the sum of privacy losses.  Let $\cG(v_{<k})$ be the event in the above probability statement.  We write $\cV_{<k}$ to be the space of all possible views of an adversary $\cA$ in $\game(\cA,k,0)$.  We then first consider the statement about privacy odometers,
\begin{align*}
& \Prob{}{|\loss(V^{(b)})| \geq \tcompfilt_{\delta_g}(2\epsilon_1,0,\cdots ,2\epsilon_k,0) \ \& \ \sum_{i=1}^k \tfrac{2\delta_i}{\epsilon_i e^{\epsilon_i}} \leq \delta_g'} \\
& = \int_{\cV_{<k}} \Prob{}{|\loss(V^{(b)}) |\geq \tcompfilt_{\delta_g}(2\epsilon_1,0,\cdots ,2\epsilon_k,0) \ \& \ \sum_{i=1}^k \tfrac{2\delta_i}{\epsilon_i e^{\epsilon_i}} \leq \delta_g' \mid \cG(v_{<k})}  \Prob{}{\cG(v_{<k})}dv_{<k} \\
& \qquad + \int_{\cV_{<k}}   \Prob{}{|\loss(V^{(b)})| \geq \tcompfilt_{\delta_g}(2\epsilon_1,0,\cdots ,2\epsilon_k,0) \ \& \ \sum_{i=1}^k \tfrac{2\delta_i}{\epsilon_i e^{\epsilon_i}} \leq \delta_g' \mid \neg \cG(v_{<k})}\Prob{}{\neg \cG(v_{<k}) } dv_{<k} \\
& \leq \delta_g \int_{\cV_{<k}}  \Prob{}{\cG(v_{<k})}dv_{<k} \\
& \qquad +  \int_{\cV_{<k}} \1\left\{  \sum_{i=1}^k \tfrac{2\delta_i}{\epsilon_i e^{\epsilon_i}} \leq \delta_g' \right\} \Prob{}{|\loss(V^{(b)})| \geq \tcompfilt_{\delta_g}(2\epsilon_1,0,\cdots ,2\epsilon_k,0)  \mid \neg \cG(v_{<k})} \left(\sum_{i=1}^k\tfrac{2\delta_{i}}{\epsilon_i e^{\epsilon_i}} \right) dv_{<k} \\
& \leq \delta_g + \delta_g'
\end{align*}
Next, we turn to the statement on privacy filters, which follows a similar argument as above by conditioning on the events where each loss can be bounded by $2\epsilon_i$ at round $i$,
\begin{align*}
\Prob{}{|\loss(V^{(b)})| \geq \epsilon_g \ \& \ \tcompfilt_{\epsilon_g,\delta_g}(2 \epsilon_1, 0, \cdots, 2\epsilon_k, 0) =\cont \ \& \ \sum_{i=1}^k \tfrac{2\delta_i}{\epsilon_i e^{\epsilon_i}} \leq \delta_g'}  \leq \delta_g + \delta_g'.
\end{align*}
\end{proof}

We will state our results for the \emph{pure} DP case, where each $\delta_i = 0$.  In the case where $\delta_i> 0$, we can translate pure DP filters and odometers to ones that apply in the more general setting by using Lemma~\ref{lem:coupling}.

\subsection{Basic Composition}
We first give an adaptive parameter version of the basic composition in \Cref{thm:basic_old}. 
\begin{theorem}
For each nonnegative $\delta_g$, $\compodom_{\delta_g}$ is a valid privacy odometer where 
$$
\compodom_{\delta_g}\left(\epsilon_1,0,\cdots, \epsilon_k,0\right) = 
										\sum_{i = 1}^k \epsilon_i .
$$
Additionally, for any $\epsilon_g,\delta_g\geq 0$, $\compfilt_{\epsilon_g,\delta_g}$ is a valid privacy filter where 
$$
\compfilt_{\epsilon_g, \delta_g}\left(\epsilon_1,0,\cdots, \epsilon_k,0\right)		= 
										\left\{ \begin{array}{lr}
										\halt & \text{ if } \sum_{i=1}^k \epsilon_i > \epsilon_g\\
										\cont & \text{ otherwise}																		\end{array}\right. .
$$
\label{thm:basic}
\end{theorem}
\begin{proof}
The proof follows simply from the definition of (pure) differential privacy, so for all possible views $v$ of the adversary in $\game(\cA,k,b)$, we have:
$$
\left|\loss(v)\right|  \leq \sum_{i = 1}^k \left|\log\left(\frac{\prob{ \cM_i(\bbx^{i,b}) = v_i | v_{<i}} }{\prob{ \cM_i(\bbx^{i,1-b})= v_i | v_{<i}}}  \right) \right| \leq \sum_{i=1}^k \epsilon_i(v_{<i})
$$
where we explicitly write the dependence of the choice of $\epsilon_i$ by $\cA$ at round $i$ on the view from the previous rounds as $\epsilon_i(v_{<i})$
\end{proof}


\section{Concentration Preliminaries}
We give a useful concentration bound that will be pivotal in proving an improved valid privacy odometer and filter from that given in \Cref{thm:basic}.
 We first present a concentration bound for \emph{self normalized processes}.

\begin{theorem}[Corollary 2.2 in \cite{PKL04}]
If $A$ and $B>0$ are two random variables such that
\begin{equation}
\ex{\exp\left(\lambda A - \frac{\lambda^2}{2} B^2\right)}\leq 1
\label{eq:sup_mart_assumpt}
\end{equation}
for all $\lambda \in \R$, then for all $\delta \leq 1/e$, $\beta> 0$ we have
$$
\prob{|A| \geq \sqrt{(B^2 + \beta) \left( 2 + \log\left( \frac{B^2 }{ \beta }  +1\right)\right) \log(1/\delta)}} \leq \delta.
$$
\label{thm:self_norm}
\end{theorem}

To put this bound into context, suppose that $B$ is a constant and we apply the bound with $\beta = B^2$.  Then the bound simplifies to
$$
\prob{|A| \geq O\big(B \sqrt{\log(1/\delta)}\big)} \leq \delta,
$$
which is just a standard concentration inequality for any subgaussian random variable $A$ with standard deviation $B$.
Another bound which will be useful for our results is the following.
\begin{theorem}[See Theorem 2.4\footnote{The stated theorem in \cite{CMWX14} has a $\sqrt{B}$ instead of $B$.  However, in their proof, variables $(A, \sqrt{B})$ are assumed to satisfy the ``canonical assumption" given in \eqref{eq:sup_mart_assumpt} and not $(A,B)$, as in the theorem statement.  Thus, there is a mistake in their theorem statement, \cite{PersonalMiao}.  We state the correct version here.} in \cite{CMWX14}]
If $A$ and $B>0$ are two random variables that satisfy \eqref{eq:sup_mart_assumpt} for all $\lambda \in \R$, then for all $c>0$ and $s,t\geq 1$ we have
$$
\prob{|A| \geq s\ B \quad c \leq B \leq t \cdot c } \leq 2\sqrt{e} \left( 1+ 2 s \log(t)\right) e^{-s^2/2}.
$$
\label{thm:self_norm2}
\end{theorem}

We will apply both \Cref{thm:self_norm} and \Cref{thm:self_norm2} to random variables coming from martingales defined from the privacy loss functions.

To set this up, we present some notation: let $(\Omega,\cF,\mathbb{P})$ be a probability triple where $\emptyset = \cF_0 \subseteq \cF_1\subseteq \cdots \subseteq \cF$ is an increasing sequence of $\sigma$-algebras.  Let $X_i$ be a real-valued $\cF_i$-measurable random variable, such that $\ex{X_i | \cF_{i-1}} = 0$ a.s. for each $i$.  We then consider the martingale where
\begin{equation}
M_0 = 0 \qquad M_k = \sum_{i = 1}^k X_i, \qquad \forall k \geq 1.
\label{eq:main_martingale}
\end{equation}
We then use the following result which gives us a pair of random variables to which we can apply \Cref{thm:self_norm}.
\begin{theorem}[Lemma 2.4 in \cite{Geer02}]
For $M_k$ defined in \eqref{eq:main_martingale}, if there exists two random variables $C_i < D_i$ that are $\cF_{i-1}$-measurable for $i \geq 1$
$$
C_i \leq X_i \leq D_i \qquad a.s. \quad \forall i \geq 1.
$$
and we define $U_k$ as
\begin{equation}
U_0^2 = 0, \qquad U_k^2 = \sum_{i=1}^k\left(D_i - C_i \right)^2,\quad \forall k \geq 1
\label{eq:variance_mart}
\end{equation}
then
$$
\exp\left[\lambda M_k - \frac{\lambda^2}{8}U_k^2 \right]
$$
is a supermartingale for all $\lambda \in \R$.
\label{lem:geer}
\end{theorem}

We then obtain the following result from combining \Cref{thm:self_norm} with \Cref{lem:geer}.

\begin{theorem}
Let $M_k$ be defined as in \eqref{eq:main_martingale} and satisfy the hypotheses of \Cref{lem:geer}.  Then for every fixed $k \geq 1$, $x >0$ and $\delta \leq 1/e$, we have
$$
\prob{|M_k| \geq \sqrt{\left(\frac{U^2_k}{4} + x\right)\left( 2+ \log\left( \frac{U^2_k}{4x} +1\right) \right) \log(1/\delta)}} \leq \delta
$$
\label{cor:self_norm}
\end{theorem}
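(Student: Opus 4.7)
The plan is a direct composition of the two cited results: \Cref{lem:geer} supplies the exponential supermartingale, and \Cref{thm:self_norm} converts such a supermartingale into the stated tail bound. No delicate martingale surgery is needed; essentially the whole job is to match notation.

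First I would apply \Cref{lem:geer} to $M_k$. Since the hypotheses give $\cF_{i-1}$-measurable bounds $C_i \le X_i \le D_i$, the lemma tells us that $\exp\!\left[\lambda M_k - \tfrac{\lambda^2}{8} U_k^2\right]$ is a supermartingale for every $\lambda \in \R$. Because $M_0 = 0$ and $U_0 = 0$, the supermartingale has initial value $1$, so the supermartingale property yields
\[
\Expectation\!\left[\exp\!\left(\lambda M_k - \tfrac{\lambda^2}{8} U_k^2\right)\right] \le 1 \quad \text{for all } \lambda \in \R.
\]

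Next I would rewrite $\tfrac{\lambda^2}{8} U_k^2 = \tfrac{\lambda^2}{2} \cdot \tfrac{U_k^2}{4}$ so that the displayed inequality becomes exactly the hypothesis \eqref{eq:sup_mart_assumpt} of \Cref{thm:self_norm} with the identifications $A := M_k$ and $B^2 := U_k^2/4$. Note that $B^2$ is a (nonnegative) random variable, which is the setting \Cref{thm:self_norm} is stated for, so this substitution is legitimate.

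Finally I would invoke \Cref{thm:self_norm} with these choices of $A$ and $B$, and with the given $\beta > 0$ and $\delta \le 1/e$. Plugging $B^2 = U_k^2/4$ into the conclusion
\[
\prob{|A| \ge \sqrt{(B^2 + \beta)\bigl(2 + \log(B^2/\beta + 1)\bigr)\log(1/\delta)}} \le \delta
\]
gives exactly the claimed bound
\[
\prob{|M_k| \ge \sqrt{\bigl(\tfrac{U_k^2}{4} + \beta\bigr)\bigl(2 + \log(\tfrac{U_k^2}{4\beta} + 1)\bigr)\log(1/\delta)}} \le \delta,
\]
completing the proof. There is no substantive obstacle here; the only thing to watch is the factor-of-two bookkeeping in converting $\tfrac{\lambda^2}{8}U_k^2$ into $\tfrac{\lambda^2}{2}(U_k^2/4)$, and verifying that $U_k^2/4$ is allowed to depend on the sample path (which is fine since \Cref{thm:self_norm} permits $B$ to be random).
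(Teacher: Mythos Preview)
Your proposal is correct and is exactly the approach the paper takes: the theorem is stated as the immediate combination of \Cref{lem:geer} (to obtain the exponential supermartingale inequality $\ex{\exp(\lambda M_k - \tfrac{\lambda^2}{2}\cdot \tfrac{U_k^2}{4})}\le 1$) and \Cref{thm:self_norm} (applied with $A=M_k$, $B^2=U_k^2/4$). The only thing you might add for completeness is that the strict inequality $C_i<D_i$ in the hypotheses guarantees $U_k^2>0$ for $k\ge 1$, so the requirement $B>0$ in \Cref{thm:self_norm} is met.
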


Similarly, we can obtain the following result by combining \Cref{thm:self_norm2} with \Cref{lem:geer}.
\begin{theorem}
Let $M_k$ be defined as in \eqref{eq:main_martingale} and satisfy the hypotheses of \Cref{lem:geer}.  Then for every fixed $k \geq 1$, $0< \beta <1$ $c>0$ and $t \geq 1$, we have
$$
\prob{|M_k| \geq \sqrt{\frac{U^2_k}{2} \log(1/\beta) } \quad \text{ and } \quad  c \leq \sqrt{\frac{U^2_k}{4}} \leq t\cdot c} \leq 2 \sqrt{e} \left(\beta + 2 \log(t) \sqrt{0.74\ \beta} \right)
$$
\label{thm:MARTself_norm2}
\end{theorem}
\begin{proof}
We use the fact that $x \log(1/x)$ is maximized at $x = 1/e$ and has value no more than $ 0.37$.
\end{proof}

Given \Cref{lem:coupling}, we will focus on finding a valid privacy odometer and filter when $\{\delta_i\} \equiv 0$.  Our analysis will then depend on the privacy loss $\loss(V)$ from \eqref{eq:game_loss} where $V$ is the view of the adversary in $\game(\cA,k,0).$  We then focus on the following martingale in our analysis:
\begin{equation}
 M_k = \sum_{i=1}^k \left( \loss_i(V_{\leq i}) - \mu_i\right) \qquad \text{ where } \qquad \mu_i=\ex{\loss_i(V_{\leq i})\left| V_{<i} \right.}.
\label{eq:priv_martingale}
\end{equation}
We can then bound the conditional expectation $ \mu_i$ with the following result from \cite{DR16} that improves on an earlier result from \cite{DRV10} by a factor of 2.
\begin{lemma}[\cite{DR16}]
For $ \mu_i$ defined in \eqref{eq:priv_martingale}, we have $ \mu_i \leq \epsilon_i\left(e^{\epsilon_i}  -1\right)/2$.
\end{lemma}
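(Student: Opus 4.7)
The plan is to recognize $\tilde\mu_i$ as a conditional Kullback--Leibler divergence and then apply data processing to reduce to a one-line computation on $\RR{\epsilon_i}$. Fixing a prefix $\tilde v_{<i}$ (which in turn fixes $\epsilon_i = \epsilon_i(\tilde v_{<i})$ and the post-processing function $P_i$), the expression \eqref{eq:simloss} for $\tilde\loss_i$ combined with the fact that the outer expectation is over $\tilde V_i$ drawn from $P_i(\RR{\epsilon_i}(0);R^0_{\simul_i})$ yields
$$
\tilde\mu_i \;=\; D\!\bigl(P_i(\RR{\epsilon_i}(0);R^0_{\simul_i}) \;\big\|\; P_i(\RR{\epsilon_i}(1);R^1_{\simul_i})\bigr).
$$

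The second step is the data-processing inequality for KL divergence. Both distributions above are obtained by pushing $\RR{\epsilon_i}(b)$ through the same (randomized) post-processing map $P_i(\cdot\,;R_{\simul_i})$, with the seed $R_{\simul_i}$ having the same distribution in both experiments and being independent of $b$. Hence
$$
\tilde\mu_i \;\leq\; D\bigl(\RR{\epsilon_i}(0) \;\big\|\; \RR{\epsilon_i}(1)\bigr),
$$
and it remains only to evaluate this KL divergence between two randomized-response distributions.

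The third step is a direct calculation from \Cref{defn:rr} with $\delta_i=0$. Only the outputs $\top$ and $\bot$ occur with positive probability on either input, and on each of them the log-ratio between the two distributions equals $\pm \epsilon_i$. Writing $p = e^{\epsilon_i}/(1+e^{\epsilon_i})$ and $q = 1-p$, one obtains
$$
D\bigl(\RR{\epsilon_i}(0)\,\big\|\,\RR{\epsilon_i}(1)\bigr) \;=\; (p-q)\,\epsilon_i \;=\; \epsilon_i\cdot\frac{e^{\epsilon_i}-1}{e^{\epsilon_i}+1} \;\leq\; \frac{\epsilon_i(e^{\epsilon_i}-1)}{2},
$$
where the last inequality uses $e^{\epsilon_i}+1\geq 2$ for $\epsilon_i \geq 0$. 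There is no real obstacle here beyond identifying the conditional expectation of the log-likelihood ratio as a KL divergence and verifying that data processing applies even though a fresh seed is drawn independently on each side; the factor-of-$2$ improvement over \cite{DRV10} comes precisely from the tight denominator $e^{\epsilon_i}+1$ that this explicit KL computation produces, rather than from a looser pointwise bound on the log-ratio.
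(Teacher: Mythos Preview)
The paper does not supply its own proof of this lemma; it is quoted directly from \cite{DR16}. Your argument is correct: conditionally on the prefix, $\tilde\mu_i$ is exactly the KL divergence between the two post-processed randomized-response outputs, data processing reduces this to $D(\RR{\epsilon_i}(0)\,\|\,\RR{\epsilon_i}(1))$, and the explicit computation gives $\epsilon_i\cdot\frac{e^{\epsilon_i}-1}{e^{\epsilon_i}+1}\le \epsilon_i(e^{\epsilon_i}-1)/2$.

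It is worth noting that the proof in \cite{DR16} proceeds somewhat differently: it bounds the expected privacy loss of an arbitrary $(\epsilon,0)$-DP mechanism directly, without first reducing to randomized response. Your route---reduce to $\RR{\epsilon_i}$ via the \cite{KOV15} simulation (\Cref{thm:kov}) and the data-processing inequality, then compute exactly---is a clean alternative that is arguably more in keeping with the machinery this paper has already set up in \Cref{alg:simgame}. Both approaches ultimately hinge on the same tight quantity $\epsilon\tanh(\epsilon/2)=\epsilon\cdot\frac{e^\epsilon-1}{e^\epsilon+1}$; your version makes it explicit that randomized response is the extremal mechanism for this bound.
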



\section{Advanced Composition for Privacy Filters}

We next show that we can essentially get the same asymptotic bound as \Cref{thm:drv} for the privacy filter setting using the bound in \Cref{cor:self_norm} for the martingale given in \eqref{eq:priv_martingale}.
\begin{theorem}
We define $\cK$ as the following where we set $x = \frac{\epsilon_g^2}{28.04 \cdot\log(1/\delta_g)}$ and,\footnote{We thank Daniel Winograd-Cort for catching an incorrectly set constant in an earlier version of this theorem.}
\begin{equation}
\cK \stackrel{\defn}{=} \sum_{j=1}^k \epsilon_j\left( \frac{e^{\epsilon_j}-1}{2} \right) + \sqrt{ 2\left(  \sum_{i=1}^k \epsilon_i^2 + x \right)\left( 1 + \frac{1}{2} \log\left(\frac{\sum_{i=1}^k \epsilon_i^2}{x} +1 \right)\right)\log(1/ \delta_g)} .
\label{eq:ABPF_comp}
\end{equation}
$\compfilt_{\epsilon_g,\delta_g}$ is a valid privacy filter for $\delta_g \in \left(0,1/e\right)$ and $\epsilon_g > 0$ where
$$
\compfilt_{\epsilon_g,\delta_g}(\epsilon_1,0,\epsilon_2,0,\cdots, \epsilon_k,0) = \left\{ \begin{array}{lr}
			\halt & \text{ if } \cK> \epsilon_g \\
			\cont & \text{ otherwise}
			\end{array}\right. .
$$
\label{thm:ABPF}
\end{theorem}
Note that if we have $\sum_{i=1}^k\epsilon_i^2 = O\left(1/\log(1/\delta_g)\right)$ and set $\epsilon_g = \Theta\left(\sqrt{\sum_{i=1}^k\epsilon_i^2\log(1/\delta_g)}\right)$ in \eqref{eq:ABPF_comp}, we are then getting the same asymptotic bound on the privacy loss as in \cite{KOV15} and in \Cref{thm:drv} for the case when $\epsilon_i = \epsilon$ for $i \in [k]$.  If $k \epsilon^2 \leq \frac{1}{8\log(1/\delta_g)}$, then \Cref{thm:drv} gives a bound on the privacy loss of $\epsilon \sqrt{8 k \log(1/\delta_g)}$.  Note that there may be better choices for the constant $28.04$ that we divide $\epsilon_g^2$ by in \eqref{eq:ABPF_comp}, but for the case when $\epsilon_g = \epsilon \sqrt{8 k \log(1/\delta_g)}$ and $\epsilon_i = \epsilon$ for every $i \in [n]$, it is nearly optimal.
\begin{proof}[Proof of \Cref{thm:ABPF}]
We focus on the martingale $M _k$ given in \eqref{eq:priv_martingale}.  In order to apply \Cref{cor:self_norm} we set the lower bound for $M_i$ to be $C_i = \left( -\epsilon_i - \mu_i\right)$ and upper bound to be $D_i = \left( \epsilon_i - \mu_i\right)$ in order to compute $U_k^2$ from \eqref{eq:variance_mart}.  We then have for the martingale in \eqref{eq:priv_martingale} that
$$
U_k^2 = 4 \sum_{i=1}^k \epsilon_i^2.
$$

We can then directly apply \Cref{cor:self_norm} to get the following for $x = \left(\frac{\epsilon_g}{\sqrt{28.04\cdot \log(1/\delta_g)}}\right)^2 > 0$ with probability at least $1- \delta_g$
$$
| M_k| \leq \sqrt{ 2\left(\sum_{i=1}^k \epsilon_i^2 +x \right) \left( 1 + \frac{1}{2} \log\left(\frac{ \sum_{i=1}^k \epsilon_i^2}{x} +1 \right) \right)\log(1/ \delta_g)}.
$$
We can then obtain a bound on the privacy loss with probability at least $1-\delta_g$ over $ v \sim  V^0$
\begin{align*}
\left|\loss( v) \right| & \leq \sum_{i=1}^k  \mu_i + \sqrt{ 2\left(\sum_{i=1}^k \epsilon_i^2 +x \right) \left( 1 + \frac{1}{2} \log\left(\frac{ \sum_{i=1}^k \epsilon_i^2}{x} +1 \right) \right)\log(1/ \delta_g)}
\\
&\leq  \sum_{i=1}^k \epsilon_i\left(e^{\epsilon_i} - 1 \right) + \sqrt{ 2\left(\sum_{i=1}^k \epsilon_i^2 +x\right) \left( 1 + \frac{1}{2} \log\left(\frac{ \sum_{i=1}^k \epsilon_i^2}{x} +1 \right) \right)\log(1/ \delta_g)}.
\end{align*}
\end{proof}


\section{Advanced Composition for Privacy Odometers}

One might hope to achieve the same sort of bound on the privacy loss from \Cref{thm:drv} when the privacy parameters may be chosen adversarially.  However we show that this cannot be the case for any valid privacy odometer.  In particular, even if an adversary selects the same privacy parameter $\epsilon = o(\sqrt{\log( \log(n)/\delta_g)/k } )$ each round but can adaptively select a time to stop interacting with $\game$ (which is a restricted special case of the power of the general adversary -- stopping is equivalent to setting all future $\epsilon_i,\delta_i = 0$), then we show that there can be no valid privacy odometer achieving a bound of $o( \epsilon\sqrt{k \log\left(\log(n)/\delta_g \right)} )$.  This gives a separation between the achievable bounds for a valid privacy odometers and filters.  But for privacy applications, it is worth noting that $\delta_g$ is typically set to be (much) smaller than $1/n$, in which case this gap disappears (since $\log(\log(n)/\delta_g) = (1+o(1))\log(1/\delta_g)$ ).
\begin{theorem}
For any $\delta_g \in (0,O(1))$ there is no valid $\compodom_{\delta_g}$ privacy odometer where
\begin{equation}
\compodom_{\delta_g}\left(\epsilon_1,0,\cdots, \epsilon_k,0\right) = \sum_{i=1}^k \epsilon_i \left(\frac{e^{\epsilon_i}-1}{e^{\epsilon_i}+1}\right) + o\left( \sqrt{\sum_{i=1}^k \epsilon_i^2 \log(\log(n)/\delta_g)} \right)
\label{eq:anti_o}
\end{equation}
\label{thm:anti_priv}
\end{theorem}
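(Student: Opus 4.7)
By Lemmas \ref{lem:post_wlog} and \ref{lem:coupling} we may work in $\tilde{\simgame}(\cA,k,0)$ with all $\delta_i \equiv 0$, in which every $\cM_i$ may be taken to be the randomized response $\RR{\epsilon_i}$. Consider the restricted adversary who at every round picks $\epsilon_i = \epsilon$ with neighboring databases $\bbx^{i,0},\bbx^{i,1}$ chosen so that $\cM_i = \RR{\epsilon}$, but is allowed to adaptively \emph{stop} by setting $\epsilon_j = 0$ for all $j > k^*$ (stopping is a strict special case of the adversary's power). Under $b=0$ the per-round privacy losses $L_i := \tilde\loss_i(\tilde V_{\leq i})$ are i.i.d., take values $+\epsilon$ and $-\epsilon$ with probabilities $e^\epsilon/(1+e^\epsilon)$ and $1/(1+e^\epsilon)$ respectively, have mean $\mu = \epsilon(e^\epsilon-1)/(e^\epsilon+1)$ (matching the first term in \eqref{eq:anti_o}) and variance $\Theta(\epsilon^2)$. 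Their partial sums $S_t := L_1 + \cdots + L_t$ form a simple random walk observable to the adversary in real time.

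Suppose for contradiction a valid odometer with the claimed $o(\cdot)$ bound existed. Then by Definition \ref{defn:main_odom}, applied to the adaptive stopping adversary above, with probability $\geq 1-\delta_g$ the centered walk $\hat S_t := S_t - t\mu$ must satisfy $\hat S_{k^*} \leq o(\epsilon\sqrt{k^*\log(\log(n)/\delta_g)})$ \emph{simultaneously} for every adaptive stopping time $k^* \leq k$. The plan is therefore to establish the matching anti-concentration statement: taking $k = n$, for some absolute constant $c > 0$,
\[
\Pr\!\left[\exists\, t\leq k:\; \hat S_t \geq c\,\epsilon\sqrt{t\,\log(\log(n)/\delta_g)}\right] \;>\; \delta_g,
\]
so that the adversary who stops at the first such $t$ falsifies the odometer bound and proves the theorem.

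To prove this anti-concentration claim, I would partition $\{1,\ldots,k\}$ into $J = \lceil \log_2 k \rceil = \Theta(\log n)$ geometric blocks and consider the independent centered increments $\Delta_j := \hat S_{2^j} - \hat S_{2^{j-1}}$, each a sum of $\Theta(2^{j-1})$ i.i.d.\ bounded random variables with variance $\sigma_j^2 \asymp \epsilon^2\cdot 2^{j-1}$. The hypothesis $\epsilon = o(\sqrt{\log(\log(n)/\delta_g)/k})$ keeps $\epsilon$ small enough that a quantitative Berry--Esseen CLT is accurate on each nontrivial block. Combining the Gaussian lower-tail estimate $\Pr[N(0,1)\geq x] \geq (c_0/x)e^{-x^2/2}$ with the anti-concentration bound for bounded random walks from \cite{LT91}, one obtains, for a suitably small $c$,
\[
\Pr\!\big[\Delta_j \geq c\,\epsilon\sqrt{2^{j-1}\log(J/\delta_g)}\,\big] \;\gtrsim\; \frac{\delta_g}{J\sqrt{\log(J/\delta_g)}},
\]
and $\log(J/\delta_g) = \Theta(\log(\log(n)/\delta_g))$. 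Because distinct $\Delta_j$ use disjoint rounds they are independent, so the probability that at least one block event occurs is at least $1 - \exp\!\big(-\Omega(\delta_g/\sqrt{\log(\log(n)/\delta_g)})\big) > \delta_g$ for $n$ sufficiently large. Augmenting the block event with the Chebyshev-controlled event $\{|\hat S_{2^{j-1}}| \leq c'\epsilon\sqrt{2^{j-1}}\}$, which has constant probability independent of $\Delta_j$, ensures that stopping at $k^* = 2^j$ gives $\hat S_{k^*}$ already of order $c\,\epsilon\sqrt{k^*\log(\log(n)/\delta_g)}$, completing the argument.

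The main obstacle is constant management: the $\sqrt{\log(1/\delta_g)}$ factor from the filter analysis (Theorem \ref{thm:ABPF}) arises from a one-sided \emph{upper} tail on a single partial sum, whereas here we need a \emph{lower-tail} estimate tight enough to survive a union over $\Theta(\log n)$ independent blocks and still leave total probability strictly above $\delta_g$. The $(1/x)$ prefactor in the Gaussian lower tail, combined with a threshold whose scaling with $J$ is only $\sqrt{\log J}$ (logarithmic, not polynomial), is precisely what makes this balance work --- inflating the root by any larger factor collapses $J \cdot \Pr[B_j]$ below $\delta_g$. Quantitatively absorbing the Berry--Esseen correction at the smallest nontrivial block size, and verifying that the fixed-$\epsilon$ adaptive-stopping restriction is indeed worst case for the odometer's bound, are the remaining routine checks.
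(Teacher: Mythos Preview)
Your overall strategy matches the paper's proof closely: both construct a fixed-$\epsilon$ adversary who adaptively stops, subdivide time geometrically, and invoke the anti-concentration lemma from \cite{LT91} at each geometric scale to show that with probability exceeding $\delta_g$ some partial sum of the centered privacy loss rises above the putative odometer bound. The paper works with the cumulative sums $M_{m_\delta^j}$ at geometric times and a conditional chain argument; your version with disjoint block increments $\Delta_j$ and independence is a clean variant of the same idea.

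However, there is a genuine quantitative gap in your bookkeeping. From the Gaussian lower tail with its $1/x$ prefactor you arrive at
\[
\Pr\big[\Delta_j \geq c\,\epsilon\sqrt{2^{j-1}\log(J/\delta_g)}\big]\;\gtrsim\;\frac{\delta_g}{J\sqrt{\log(J/\delta_g)}},
\]
and then assert $1 - \exp\!\big(-\Omega(\delta_g/\sqrt{\log(\log(n)/\delta_g)})\big) > \delta_g$. This last inequality is false: for small $x$ one has $1-e^{-x}\approx x$, so you are claiming $\Omega(\delta_g/\sqrt{\log(\log(n)/\delta_g)}) > \delta_g$, which would require the hidden constant to exceed $\sqrt{\log(\log(n)/\delta_g)}$, a quantity that grows with $n$. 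The $1/x$ prefactor in the Gaussian tail is exactly what ruins the balance; it shaves off precisely the $\sqrt{\log}$ factor you cannot afford to lose.

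The fix is to drop the Gaussian-tail step and use the \cite{LT91} bound directly, which gives $\Pr[\Delta_j \geq x] \geq \exp\!\big(-(1+\alpha)x^2/(2\sigma_j^2)\big)$ with no polynomial prefactor. Choosing the threshold constant $c$ so that $(1+\alpha)c^2/2 < 1$ yields $p_j \geq (\delta_g/J)^{1-\eta}$ for some $\eta>0$, hence $\sum_j p_j \geq J^\eta \delta_g^{1-\eta}$, which does dominate $\delta_g$. This is essentially how the paper proceeds: in its Lemma~\ref{lem:anti_priv} it sets $C = C_1/(2\sqrt{C_2})$ to make the exponent subcritical and then sums over $\lfloor \log_{m_\delta}(n)\rfloor$ geometric scales to obtain total probability $\geq \delta_g$ after the substitution $\delta \mapsto \delta/\log(n)$.
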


In order to prove \Cref{thm:anti_priv}, we use the following anti-concentration bound for a sum of random variables.
\begin{lemma}[Lemma 8.1 in \cite{LT91}]
Let $X_1, \cdots, X_k$ be a sequence of mean zero i.i.d. random variables such that $|X_1|< a$ and $\sigma^2 = \ex{X_1^2}$.  For every $\alpha>0$ there exists two positive constants $C_\alpha$ and $c_\alpha$ such that for every $x$ satisfying $\sqrt{k} \sigma C_\alpha \leq x \leq c_\alpha \frac{k\sigma^2}{a}$ we have
$$
\prob{\sum_{i=1}^k X_i \geq x } \geq \exp\left[ -(1+\alpha) \frac{x^2}{2k\sigma^2}\right]
$$
\label{lem:anti}
\end{lemma}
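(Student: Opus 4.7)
The statement is a classical moderate-deviations lower bound for sums of i.i.d.\ bounded random variables; the natural approach is Cram\'er's change-of-measure (exponential tilting) method, carrying the error terms carefully enough to expose the factor $(1+\alpha)$.

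\textbf{Setup and tilt.} Let $\psi(\lambda) = \log \Expectation[e^{\lambda X_1}]$ be the cumulant generating function of $X_1$, which is finite for all $\lambda \in \R$ since $|X_1| \leq a$. Define the tilted law $Q_\lambda$ by $dQ_\lambda/dP = e^{\lambda X_1 - \psi(\lambda)}$, and let $Q_\lambda^{(k)}$ denote its product extension to $(X_1,\ldots,X_k)$. Under $Q_\lambda$, the random variables $X_i$ are still i.i.d., with mean $m(\lambda) = \psi'(\lambda)$ and variance $v(\lambda) = \psi''(\lambda)$. Choose $\lambda > 0$ to be the unique solution of $m(\lambda) = x/k$; since $\psi(0)=\psi'(0)=0$, $\psi''(0)=\sigma^2$, and $|\psi'''| \lesssim a\sigma^2$ in a neighborhood of $0$, the hypothesis $x \leq c_\alpha k\sigma^2/a$ guarantees $\lambda a$ is small, so by Taylor expansion $\lambda = x/(k\sigma^2) \cdot (1 + O(\alpha))$ and $v(\lambda) = \sigma^2(1 + O(\alpha))$, with the implicit constants shrinking as $c_\alpha \to 0$.

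\textbf{Reversal formula and window restriction.} Writing $S_k = \sum_{i=1}^k X_i$, the standard change-of-measure identity gives
\[
P(S_k \geq x) \;=\; \Ex{Q_\lambda^{(k)}}{e^{-\lambda S_k + k\psi(\lambda)} \, \1\{S_k \geq x\}}.
\]
To get a lower bound, restrict to the window $S_k \in [x,\, x + \sqrt{kv(\lambda)}\,]$, on which $e^{-\lambda S_k} \geq e^{-\lambda x - \lambda \sqrt{kv(\lambda)}}$. This yields
\[
P(S_k \geq x) \;\geq\; \exp\!\left(-\lambda x + k\psi(\lambda) - \lambda \sqrt{kv(\lambda)}\right) \cdot Q_\lambda^{(k)}\!\left(x \leq S_k \leq x + \sqrt{kv(\lambda)}\right).
\]

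\textbf{Exponent expansion.} Plug in the Taylor expansion of $\psi$ around $0$: $-\lambda x + k\psi(\lambda) = -\tfrac{x^2}{2k\sigma^2}(1 + O(\lambda a))$, where the $O(\lambda a)$ term is controlled by $c_\alpha$. The additive correction $\lambda\sqrt{kv(\lambda)} \asymp x/\sqrt{k}$ is of lower order relative to $x^2/(k\sigma^2)$ precisely when $x \gg \sqrt{k}\sigma$, which is ensured by $x \geq C_\alpha\sqrt{k}\sigma$ with $C_\alpha$ large. Choosing $c_\alpha$ small and $C_\alpha$ large as functions of $\alpha$, the entire exponent is at least $-(1+\alpha)\,x^2/(2k\sigma^2)$.

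\textbf{Window probability via CLT.} Under $Q_\lambda^{(k)}$, the variables $X_i - m(\lambda)$ are i.i.d., mean-zero, bounded by $2a$, with variance $v(\lambda) \asymp \sigma^2$. By the Berry--Esseen theorem applied under $Q_\lambda$, the standardized sum is $O(a/(\sigma\sqrt{k}))$-close to a standard Gaussian, and $a/(\sigma\sqrt{k})$ is small by the upper bound on $x$ combined with $x \geq C_\alpha \sqrt{k}\sigma$. Hence $Q_\lambda^{(k)}(x \leq S_k \leq x + \sqrt{kv(\lambda)})$ is bounded below by an absolute constant (e.g., close to $\Phi(1) - \Phi(0) \approx 0.34$), which is absorbed into the $\alpha$ slack by enlarging $C_\alpha$ so that the constant exceeds $\exp(-\alpha\,x^2/(2k\sigma^2))$.

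\textbf{Main obstacle.} The delicate point is calibrating the two constants $C_\alpha$ and $c_\alpha$ so that all three sources of error---the cubic (and higher) terms in $\psi$, the additive window correction $\lambda\sqrt{kv(\lambda)}$, and the multiplicative constant from the CLT window probability---are simultaneously absorbed into the single factor $(1+\alpha)$ in the exponent. This is where the moderate-deviation range $C_\alpha\sqrt{k}\sigma \leq x \leq c_\alpha k\sigma^2/a$ is essential: the lower endpoint makes the linear-in-$x$ corrections negligible relative to $x^2/k$, while the upper endpoint keeps $\lambda a$ small enough that $\psi$ is well approximated by its quadratic.
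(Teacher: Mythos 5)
The paper does not prove this statement; it simply cites it as Lemma~8.1 of Ledoux--Talagrand, so there is no authorial proof to compare against. Your argument is a correct reconstruction of the standard Cramér/conjugate-distribution proof of a moderate-deviation lower bound, which is the canonical route to this result: tilt by $\lambda$ with $\psi'(\lambda)=x/k$, restrict to a one-standard-deviation window above $x$ under the tilted law, and observe that the three error sources---the cubic-and-higher terms in $\psi$ (controlled because $\lambda a \lesssim c_\alpha$ via the upper bound on $x$), the additive window correction $\lambda\sqrt{k v(\lambda)}\asymp x/(\sqrt{k}\sigma)$ (negligible relative to $x^2/(k\sigma^2)$ once $x\geq C_\alpha\sqrt{k}\sigma$), and the Berry--Esseen constant for the window probability---can each be charged to a share of the $\alpha$ slack by taking $c_\alpha$ small and $C_\alpha$ large. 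The one cosmetic slip is writing $\lambda\sqrt{kv(\lambda)}\asymp x/\sqrt{k}$ where it should be $x/(\sqrt{k}\sigma)$, but the comparison to $x^2/(k\sigma^2)$ that you actually use is the correct one, so the argument is unaffected.
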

For $\gamma \in [1/2,1)$, we define the random variables $\xi_i \in \{-1,1 \}$ where
\begin{equation}
\prob{\xi_i = 1} = \gamma \quad \prob{\xi_i = -1} = 1-\gamma.
\label{eq:bias}
\end{equation}
Note that $\ex{\xi_i}\stackrel{\defn}{=}\mu =  2\gamma-1$ and $\var{\xi_i} \stackrel{\defn}{=}\sigma^2= 1-\mu^2$.  We then consider the sequence of i.i.d. random variables $X_1, \cdots, X_n$ where  $X_i = \left( \xi_i - \ex{\xi_i}\right)$.  We denote the sum of $X_i$ as
\begin{equation}
M_n = \sum_{i=1}^n X_i.
\label{eq:sum_stop}
\end{equation}
We then apply \Cref{lem:anti} to prove an anti-concentration bound for the martingale given above.
\begin{lemma}[Anti-Concentration]
Consider the partial sums $M_t$ defined in \eqref{eq:sum_stop} for $t \in [n]$.  There exists a constant $C$ such that for all $\delta \in (0,O(1))$ and $n > \Omega\left( \log(1/\delta)\cdot \left(\frac{1+\mu}{\sigma}\right)^2 \right)$ we have
$$
\prob{\exists t \in [n] \text{ s.t. } M_t \geq C \sigma\sqrt{t \log( \log(n)/\delta)}} \geq \delta.
$$
\label{lem:anti_priv}
\end{lemma}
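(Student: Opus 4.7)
The plan is to use a geometric peeling argument over dyadic times, leveraging the mutual independence of the increments of $M_t$ together with the anti-concentration bound of \Cref{lem:anti}. Set $t_j = 2^j$ for $j = 0, 1, \ldots, K := \lfloor \log_2 n \rfloor$ and define the block increments $\Delta_j = M_{t_j} - M_{t_{j-1}}$ for $j \geq 1$. Each $\Delta_j$ is a sum of $t_{j-1}$ i.i.d.\ zero-mean variables $X_i$ with $|X_i| \leq 1+\mu$ and variance $\sigma^2$, and the collection $\{\Delta_j\}_{j=1}^K$ is mutually independent, which is what lets us amplify a per-scale anti-concentration probability into one of total order $\delta$.

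For a constant $c>0$ chosen so that $(1+\alpha)c^2/2 \leq 1$, I would apply \Cref{lem:anti} to each $\Delta_j$ with target $x_j = c\,\sigma\sqrt{t_{j-1}\log(\log n/\delta)}$. The lower range condition $\sqrt{t_{j-1}}\,\sigma C_\alpha \leq x_j$ holds for $\delta \leq O(1)$; the upper range condition $x_j \leq c_\alpha\, t_{j-1}\sigma^2/(1+\mu)$ reduces to $t_{j-1} \gtrsim \log(\log n/\delta)\,(1+\mu)^2/\sigma^2$, which, by the hypothesis $n \gtrsim \log(1/\delta)\,(1+\mu)^2/\sigma^2$, holds for a constant fraction of the scales $j \leq K$. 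On those scales, \Cref{lem:anti} yields
$$
\prob{\Delta_j \geq x_j} \;\geq\; \exp\!\left(-(1+\alpha)\,c^2\,\log(\log n/\delta)/2\right) \;\geq\; \delta/\log n.
$$
Mutual independence of the $\Delta_j$'s then gives $\prob{\exists\, j \leq K : \Delta_j \geq x_j} \geq 1 - (1-\delta/\log n)^{\Omega(K)} \geq c_0\,\delta$ for a universal constant $c_0>0$, provided $\delta$ is sufficiently small.

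The main obstacle I anticipate is converting ``some $\Delta_j$ is large'' into ``$M_{t_j} = M_{t_{j-1}}+\Delta_j$ is large.'' Taking the first $j$ for which $\Delta_j \geq x_j$, we are conditioning on $\Delta_1 < x_1, \ldots, \Delta_{j-1} < x_{j-1}$, which might bias $M_{t_{j-1}}$ downward. However, each $x_i$ sits far above the standard deviation of $\Delta_i$, so each $\Delta_i$'s conditional law has total-variation distance at most $O(\delta/\log n)$ from its unconditional law, and a Chebyshev-type second-moment estimate on $M_{t_{j-1}}$ under this conditioning gives $M_{t_{j-1}} \geq -O(\sigma\sqrt{t_{j-1}})$ with probability bounded away from zero. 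Using the independence of $\Delta_j$ from $(\Delta_1,\ldots,\Delta_{j-1})$ together with $t_j = 2 t_{j-1}$, the lower-order correction $O(\sigma\sqrt{t_{j-1}})$ is absorbed into $x_j = \Theta(\sigma\sqrt{t_j \log(\log n/\delta)})$ for $\log(\log n/\delta)$ large. After tuning the constants $c$ and $C$ to account for the constant-factor losses, this produces a $t \in [n]$ with $M_t \geq C\sigma\sqrt{t\log(\log n/\delta)}$ with probability at least $\delta$, completing the claim.
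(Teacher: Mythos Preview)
Your peeling argument is correct and shares the high-level skeleton of the paper's proof---select geometrically spaced times, apply \Cref{lem:anti} at each scale, and aggregate---but differs in two ways worth flagging. First, the paper peels over \emph{super-geometric} times $m_\delta, m_\delta^2, \ldots$ with ratio $m_\delta > m\log(1/\delta)$, and applies \Cref{lem:anti} directly to the full partial sum $M_{m_\delta^j}$ rather than to a block increment; with this spacing the standard deviation of the ``starting point'' $M_{m_\delta^{j-1}}$ is a factor $\sqrt{m_\delta}$ smaller than that of $M_{m_\delta^j}$, so the obstacle you correctly identify is dispatched simply by adding the overshoot $x_{j-1}=x_j/\sqrt{m_\delta}$ to the threshold, with no Chebyshev/TV argument needed. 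The trade-off is that there are only $\lfloor\log_{m_\delta} n\rfloor$ scales, so the paper proves $\prob{\exists t : M_t \geq C\sigma\sqrt{t\log(1/\delta)}} \geq \lfloor\log_{m_\delta} n\rfloor\cdot\delta$ and obtains the stated bound after rescaling~$\delta$. Second, your dyadic choice $t_j=2^j$ gives $\Theta(\log n)$ scales immediately and lets you exploit the mutual independence of the $\Delta_j$, but because $M_{t_{j-1}}$ now sits on the same scale as $\Delta_j$, you genuinely need the extra step controlling the starting point; your TV/second-moment patch works, provided you also invoke a matching \emph{upper}-tail bound (e.g.\ Hoeffding) so that each $\prob{\Delta_i \geq x_i}=O(\delta/\log n)$ and the conditional law of $(\Delta_1,\ldots,\Delta_{j-1})$ given ``$j$ is first'' is indeed $O(\delta)$-close in total variation to the unconditional one.
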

\begin{proof}
By \Cref{lem:anti}, we know that there exists constants $ C_1,  C_2,  C_3$ and large $N$ such that for all $m > N \cdot\left(\frac{1+\mu}{\sigma}\right)^2$ and $x \in \left[1, C_3 \sqrt{m} \frac{\sigma}{1+\mu}\right]$, we have
$$
\prob{\sum_{i=1}^{m} X_i \geq  C_1 \sqrt{m} \sigma x} \geq e^{- C_2 x^2}.
$$

Rather than consider every possible $t \in [n]$, we consider $j \in \left\{m_\delta, m_\delta^2, \cdots, m_\delta^{\lfloor \log_{m_\delta}(n) \rfloor} \right\}$ where $m_\delta \in \N$ and $m_\delta > m \log(1/\delta)$.  We then have for a constant $C$ that
\begin{align*}
& \prob{\exists t \in [n] \text{ s.t. } M_t \geq C \sigma\sqrt{t \log(1/\delta)}} \geq \prob{\exists j \in \left[\lfloor\log_{m_\delta}(n)\rfloor \right] \text{ s.t. } M_{m_\delta^j} \geq C \sigma\sqrt{m_\delta^j \log(1/\delta)}} \\
& \qquad = \sum_{j = 1}^{\lfloor\log_{m_\delta}(n)\rfloor} \prob{M_{m_\delta^j} \geq C \sigma\sqrt{m_\delta^j \log(1/\delta)} \left| M_{m_\delta^\ell} \leq C \sigma \sqrt{m_\delta^\ell \log(1/\delta)} \quad \forall \ell < j \right.} \\
& \qquad \geq \sum_{j = 1}^{\lfloor\log_{m_\delta}(n)\rfloor} \prob{M_{m_\delta^j} \geq C \sigma \left( \sqrt{m_\delta^j \log(1/\delta)} +  \sqrt{m_\delta^{j-1} \log(1/\delta)}\right)} \\
& \qquad = \sum_{j = 1}^{\lfloor\log_{m_\delta}(n)\rfloor} \prob{M_{m_\delta^j} \geq C \sigma\left(1+1/\sqrt{m_\delta} \right)  \sqrt{m^j_\delta \log(1/\delta)} }
\\
& \qquad \geq \sum_{j = 1}^{\lfloor\log_{m_\delta}(n)\rfloor} \prob{M_{m_\delta^j} \geq 2C \sigma\sqrt{m_\delta^j \log(1/\delta)} }
\end{align*}
Thus, we set $C = \frac{ C_1}{2 \sqrt{ C_2}}$ and then for any $\delta$ such that $\sqrt{ C_2}< \sqrt{\log(1/\delta)} < C_3\sqrt{ C_2}\sqrt{m_\delta} \frac{\sigma}{1+\mu}$, we have
$$
\prob{\exists t \in [n] \text{ s.t. } M_t \geq C \sigma \sqrt{t \log(1/\delta)}} \geq \lfloor\log_{m_\delta}(n)\rfloor \delta.
$$
\end{proof}

\begin{algorithm}
\caption{Stopping Time Adversary $\cA_{\epsilon,\delta}$ with constant $C$}
\label{alg:stop_adv}
\begin{algorithmic}
\FOR{$i = 1,\cdots, k$}
	\STATE $\cA_{\epsilon,\delta} = \cA_{\epsilon,\delta(C,Y_1,\cdots, Y_{i-1})}$ gives datasets $\{0,1\}$, parameter $(\epsilon,0)$ and $\RR{\epsilon}$ to $\game$.
	\STATE $\cA_{\epsilon,\delta}$ receives $Y_i \in \{ \top,\bot\}$.
	\IF{$Y_i = \top$}
		\STATE $X_i = \epsilon$
	\ELSE
		\STATE $X_i = -\epsilon$
	\ENDIF
	\IF{$\sum_{j=1}^i\left(X_j - \epsilon\frac{e^\epsilon - 1}{e^\epsilon + 1}\right) \geq C\left(\epsilon \sqrt{i \log\left(\log(n)/\delta\right)}\right)$, }
		\STATE $\epsilon_{i+1},\cdots \epsilon_k = 0$
		\STATE BREAK
	\ENDIF
 \ENDFOR
\end{algorithmic}
\end{algorithm}

We next use \Cref{lem:anti_priv} to prove that we cannot have a bound like \Cref{thm:drv} in the adaptive privacy parameter setting, which uses the \emph{stopping time adversary} given in \Cref{alg:stop_adv}.
\begin{proof}[Proof of \Cref{thm:anti_priv}]
Consider the stopping time adversary $\cA_{\epsilon,\delta_g}$ from \Cref{alg:stop_adv} for a constant $C$ that we will determine in the proof. Let the number of rounds $k = n$ and $\epsilon = 1/n$.  In order to use \Cref{lem:anti_priv} we define $\gamma = \frac{e^\epsilon}{1+e^\epsilon}$ from \eqref{eq:bias}.  Because we let $\epsilon$ depend on $n$, we have $\mu \equiv \mu_n = \frac{e^{1/n} -1 }{e^{1/n} + 1} = O(1/n)$ and $\sigma \equiv \sigma_n = 1-\mu_n^2 = 1-O(1/n^2)$ which gives $\frac{1+\mu_n}{\sigma_n} = \Theta(1)$.  We then relate the martingale in \eqref{eq:sum_stop} with the privacy loss for this particular adversary in $\game(\cA_{\epsilon,\delta_g},n,0)$ with view $V$ who sets $X_t = \pm \epsilon$ each round,
$$
\sum_{j=1}^t\left(X_j - \frac{\mu_n}{n} \right) 
= \frac{1}{n} M_t \qquad \forall t \in[n].
$$
Hence, at any round $t$ if $\cA_{\epsilon,\delta_g}$ finds that
\begin{equation}
 \frac{1}{n} M_t\geq C\left( \frac{1}{n}\sqrt{t \log(\log(n)/\delta_g)} \right) 
 \label{eq:bad_loss}
 \end{equation}
then she will set all future $\epsilon_i=0$ for $i>t$.  To find the probability that \eqref{eq:bad_loss} holds in any round $t\in [n]$ we use \Cref{lem:anti_priv} with the constant $C$ from the lemma statement to say that \eqref{eq:bad_loss} occurs with probability at least $\delta_g$.

Assume that $\compodom_{\epsilon_g}$ is a valid privacy odometer and \eqref{eq:anti_o} holds.  We then know that with probability at least $1-\delta_g$ over $v \sim V^b$ where $V^b$ is the view for $\game(\cA_{1/n,\delta_g},n,b)$
$$
|\loss(v)| \leq \compodom_{\delta_g}\left(\epsilon_1,0,\cdots,\epsilon_k,0 \right)
$$
$$
\implies \left| \sum_{i=1}^t \loss_i(v_{\leq i})\right| = t \cdot\frac{\mu_n}{n} +o\left( \frac{1}{n} \sqrt{t \log\left( \frac{\log(n)}{\delta_g} \right)} \right) \qquad \forall t \in [n]
$$
But this is a contradiction given that the bound in \eqref{eq:bad_loss} at any round $t \in [n]$ occurs with probability at least $\delta_g$.
\end{proof}

We now utilize the bound from \Cref{cor:self_norm} to obtain a concentration bound on the privacy loss.  

\begin{lemma}
$\compodom_{\delta_g}$ is a valid privacy odometer for $\delta_g \in \left(0,1/e\right)$ where for any $x>0$,
$$
\compodom_{\delta_g}(\epsilon_1,0,\epsilon_2,0,\cdots, \epsilon_k,0) =  \sum_{j=1}^k \epsilon_j\left(e^{\epsilon_j}-1 \right)/2 + \sqrt{ 2\left(  \sum_{i=1}^k \epsilon_i^2 + x \right)\left( 1 + \frac{1}{2} \log\left(\frac{ \sum_{i=1}^k \epsilon_i^2}{x} +1 \right)\right)\log(2/ \delta_g)}.
$$
\label{lem:full_adapt_y}
\end{lemma}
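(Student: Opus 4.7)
The plan is to first apply \Cref{lem:coupling} with $\delta_g' = \delta_g/2$ to reduce the claim to producing a valid odometer $\tcompodom_{\delta_g/2}$ in the restricted setting where $\{\delta_i\} \equiv 0$; the ``if $\sum_i \delta_i > \delta_g/2$ then output $\infty$'' rule in the statement is exactly the lift that lemma supplies. In the reduced setting I would work with the martingale
\[
\tilde M_k \;=\; \sum_{i=1}^k \bigl(\tilde\loss_i(\tilde V_{\leq i}) - \tilde\mu_i\bigr)
\]
from \eqref{eq:priv_martingale}, with filtration generated by the prefixes $\tilde V_{<i}$, and bound $|\tilde\loss(\tilde v)|$ with probability at least $1-\delta_g/2$ over $\tilde v \sim \tilde V^0$ drawn from $\tilde\simgame(\cA,k,0)$.

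The core step mirrors the envelope choice in the filter proof of \Cref{thm:ABPF}: since pure $(\epsilon_i,0)$-indistinguishability gives $|\tilde\loss_i| \leq \epsilon_i$, I would take $C_i = -\epsilon_i - \tilde\mu_i$ and $D_i = \epsilon_i - \tilde\mu_i$, yielding $U_k^2 = 4\sum_{i=1}^k \epsilon_i^2$. The crucial point is that $\epsilon_i$ and $\tilde\mu_i$ are $\cF_{i-1}$-measurable even in the adaptive parameter game, because the adversary chooses $\epsilon_i$ as a deterministic function of $\tilde V_{<i}$; this is what allows \Cref{cor:self_norm} to apply with a \emph{random}, adaptively determined $U_k$. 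Applying that corollary at the horizon $k$ with failure probability $\delta_g/2$ and the free parameter $\beta > 0$ from the odometer definition yields, after absorbing the factor of $4$ from $U_k^2$ into the expression, exactly the concentration term appearing in the claimed bound. To finish, combining this with the deterministic drift bound $\tilde\mu_i \leq \epsilon_i(e^{\epsilon_i}-1)/2$ from \cite{DR16} and the nonnegativity of $\tilde\mu_i$ (it is essentially a KL divergence), the triangle inequality gives
\[
|\tilde\loss(\tilde v)| \;\leq\; |\tilde M_k| + \sum_{i=1}^k \tilde\mu_i \;\leq\; |\tilde M_k| + \sum_{j=1}^k \frac{\epsilon_j(e^{\epsilon_j}-1)}{2}
\]
on the good event, which is exactly $\tcompodom_{\delta_g/2}\left(\epsilon_1,0,\cdots,\epsilon_k,0\right)$.

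The main subtlety to check carefully is that \Cref{cor:self_norm} genuinely applies when $U_k^2$ itself is a random variable determined by the adaptive sequence of $\epsilon_i$: one must verify that the envelopes $C_i, D_i$ above are $\cF_{i-1}$-measurable in $\tilde\simgame$, which relies on the adversary's choices of $\epsilon_i$ and post-processing $P_i$ both being deterministic functions of $\tilde V_{<i}$. Aside from this, the proof essentially parallels the filter proof of \Cref{thm:ABPF}, with the single difference that $\beta$ is exposed as a free parameter rather than being tuned to a specific privacy budget $\epsilon_g$; this makes the resulting bound a function of the realized parameters $(\epsilon_1,\ldots,\epsilon_k)$ only, as the odometer definition requires.
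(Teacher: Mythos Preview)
Your proposal is correct and follows essentially the same approach as the paper: reduce to $\{\delta_i\}\equiv 0$ via \Cref{lem:coupling}, apply \Cref{cor:self_norm} to the martingale $\tilde M_k$ with envelopes $C_i=-\epsilon_i-\tilde\mu_i$, $D_i=\epsilon_i-\tilde\mu_i$ (so $U_k^2=4\sum_i\epsilon_i^2$), and then add back the drift using the bound $\tilde\mu_i\leq\epsilon_i(e^{\epsilon_i}-1)/2$. Your explicit discussion of the $\cF_{i-1}$-measurability of the envelopes and of the lift via \Cref{lem:coupling} with $\delta_g'=\delta_g/2$ are details the paper leaves implicit, but the argument is the same.
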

\begin{proof}
We will follow a similar argument as in \Cref{thm:ABPF} where we use the same martingale $M_k$ from \eqref{eq:priv_martingale}.  We can then directly apply \Cref{cor:self_norm} to get the following for any $\beta > 0$ with probability at least $1- \delta_g$
$$
\left|M_k\right| \leq \sqrt{ 2\left(\sum_{i=1}^k \epsilon_i^2 +x\right) \left( 1 + \frac{1}{2} \log\left(\frac{ \sum_{i=1}^k \epsilon_i^2}{x} +1 \right) \right)\log(2/ \delta_g)}
$$
\end{proof}
This above result is only useful if we can plug in a constant $x>0$.  If we were to set $x = \sum_{i=1}^k\epsilon_i^2$, then we would get asymptotically close to the same bound as in \Cref{thm:drv}, however, the $\epsilon_i$ are random variables, and their realizations cannot be used in setting $x$;  further, we know from \Cref{thm:anti_priv} that such a bound cannot hold in this setting.

We now give our main positive result for privacy odometers, which is similar to our privacy filter in \Cref{thm:ABPF} except that $\delta_g$ is replaced by $\delta_g/\log(n)$, as is necessary from \Cref{thm:anti_priv}.  Note that the bound incurs an additive $1/n^2$ loss to the $\sum_i \epsilon_i^2$ term that is present without privacy.   In any reasonable setting of parameters, this translates to at most a constant-factor multiplicative loss, because there is no utility running any differentially private algorithm with $\epsilon_i < \frac{1}{10n}$
(we know that if $A$ is $(\epsilon_i,0)$-DP then $A(\bbx)$ and $A(\bbx')$ for any pair of inputs have statistical distance at most $e^{\epsilon_i n} -1 < 0.1$, and hence the output is essentially independent of the input - note that a similar statement holds for $(\epsilon_i,\delta_i)$-DP.) 
%
\begin{theorem}[Advanced Privacy Odometer\footnote{This bound is different from what appeared in \cite{RRUV16}, which had $2\sqrt{\sum_{i=1}^k \epsilon_i^2 \left( 1+\log(\sqrt{3}) \right) \log\left( \frac{4\log_2(n)}{\delta_g} \right)} $ instead of the term $\sqrt{2\sum_{i=1}^k \epsilon_i^2 \left(\log(110 e) + 2 \log\left(\frac{ \log(n)}{\delta_g} \right) \right)}$ in \eqref{eq:main_odom}, which is an improvement when $n\geq 10$ and $\delta_g < 1/2$.}]
$\compodom_{\delta_g}$ is a valid privacy odometer for $\delta_g \in \left(0,1/e\right)$ where $\sum_{i=1}^k \epsilon_i^2 \in [1/n^2,1]$ and 
\begin{align}
 & \compodom_{\delta_g} (\epsilon_1,0,\epsilon_2,0,\cdots, \epsilon_k,0)   =  \sum_{i=1}^k  \epsilon_i\left( \frac{e^{\epsilon_i}-1}{2} \right) + \sqrt{2\sum_{i=1}^k \epsilon_i^2 \left(\log(110 e) + 2 \log\left(\frac{ \log(n)}{\delta_g} \right) \right)}
\label{eq:main_odom}
\end{align}
and if $\sum_{i=1}^k \epsilon_i^2 \notin [1/n^2,1]$ then $ \compodom_{\delta_g}(\epsilon_1,0,\epsilon_2,0,\cdots, \epsilon_k,0)  $ is equal to
\begin{equation}
  \sum_{i=1}^k  \epsilon_i\left( \frac{e^{\epsilon_i}-1}{2} \right) + \sqrt{2 \left(1/n^2 +  \sum_{i=1}^k \epsilon_i^2\right) \left( 1 + \frac{1}{2}\log\left(1+ n^2 \sum_{i=1}^k \epsilon_i^2 \right)\right) \log(4\log_2(n)/ \delta_g)}.
\label{eq:notmain_odom}
\end{equation}
\label{thm:main3}
\end{theorem}
\begin{proof}
We first focus on proving the bound in \eqref{eq:main_odom}.  For the martingale $M_k$ in \eqref{eq:priv_martingale}, we can use \Cref{thm:MARTself_norm2} with $c = 1/n$ and $t = n$ to get the following for any $\beta > 0$
\begin{align*}
& \Prob{}{|M_k| \geq \sqrt{2 \sum_{i=1}^k \epsilon_i^2 \log\left(1/\beta \right)} \qquad \text{ and } \qquad   \frac{1}{n} \leq \sqrt{\sum_{i=1}^k \epsilon_i^2} \leq 1 } \\
& \qquad \leq 2 \sqrt{e} \left(\beta + 2 \log(n)\sqrt{0 .74 \ \beta} \right).
\end{align*}
We then solve for $\beta$ so that $\delta_g/2 = 2 \sqrt{e} \left(\beta + 2 \log(n) \sqrt{0.74 \ \beta} \right)$, which yields,
$$
\beta =0.74 \ \log^2(n) \left(\sqrt{1 + \frac{\delta_g}{2.96 \sqrt{e}\log^2(n)}} -1  \right)^2 \geq 0.74 \ \log^2(n) \left( \frac{\delta_g}{9\sqrt{e} \log^2(n)}\right)^2
$$
where the inequality is due to $\sqrt{1+x} \geq 1+x/3$ for $0<x<1 $.
This gives the stated bound in \eqref{eq:main_odom}.

For the bound given in \eqref{eq:notmain_odom}, we set $x = 1/n^2$ in \Cref{lem:full_adapt_y}.  Hence, we would have with probability at least $1- \delta_g$ when $\sum_{i=1}^k\epsilon_i^2 \notin [1/n^2,1]$,
$$
|M_k|\leq  \sum_{j=1}^k  \sqrt{ 2\left( 1/n^2+ \sum_{i=1}^k \epsilon_i^2 \right)\left( 1 + \frac{1}{2} \log\left(1 +n^2 \sum_{i=1}^k \epsilon_i^2 \right)\right)\log(2\log_2(n)/\delta_g)}.
$$
\end{proof}
In the above theorem, we only allow privacy parameters such that $\sum_{i=1}^{k} \eps_i^2 \in [1/n^2, 1]$.  This assumption is not too restrictive, since the output of a single $(\ll 1/n)$-differentially private algorithm is nearly independent of its input.  More generally, we can replace $1/n^2$ with an arbitrary ``granularity parameter'' $\gamma$ and require that $\sum_{i=1}^{k} \eps_i^2 \in [\gamma, 1]$.  When doing so, $\log_2(n^2)/\delta_g$ in~\eqref{eq:main_odom} will be replaced with $\log_2(1/\gamma)/\delta_g$.  For example, we could require that $\epsilon_1 \geq \delta_g$, in which case we can choose $\gamma = \delta_g^2$, which would not affect our bound substantially.

\section*{Acknowledgements}
The authors are grateful Jack Murtagh for his collaboration in the early stages of this work, and for sharing his preliminary results with us.  We thank Andreas Haeberlen, Benjamin Pierce, and Daniel Winograd-Cort for helpful discussions about composition.  We further thank Daniel Winograd-Cort for catching an incorrectly set constant in an earlier version of \Cref{thm:ABPF}.  Special thanks to Valentin Hartmann for pointing out an error in a earlier version of Lemma~\ref{lem:coupling}.  The revised version translates pure DP filters/odometers to general DP filters/odometers, but with worse constants than the earlier version.
\clearpage

\bibliography{refs}{}
\bibliographystyle{alpha}

\end{document}